\def\@copyrightspace{\relax}
\newcommand{\eg}{{\em e.g.,}}
\newcommand{\para}[1]{\smallskip\noindent {\bf #1}}
\definecolor{gray}{rgb}{0.5,0.5,0.5}
\long\def\delete#1{}
\newcommand{\oneUmedian}{{$Frugal$-$1U$-$Median$~}}
\newcommand{\oneUany}{{$Frugal$-$1U$~}}
\newcommand{\twoUany}{{$Frugal$-$2U$~}}
\newcommand{\qdigest}{{$q$-$digest$~}}
\newcommand{\selection}{{$Selection$~}}
\newcommand{\gk}{{$GK$~}}
\newcommand{\ninty}{{90-\%~}}
\newcommand{\TODOQUIANG}[1]{{\bf TODO}}
\newlength{\figurewidthA}
\newlength{\figurewidthB}
\newlength{\figurewidthC}
\newlength{\figurewidthD}
\newlength{\figurewidthE} 
\newlength{\figurewidthF}
\newlength{\figurewidthG}
\newlength{\figurewidthH}
\newlength{\figurewidthI}
\newlength{\figurewidthJ}
\def\drafttime{\ifnum \hour<13 \number\hour:%
                      \ifnum \minute<10 0\fi
                      \number\minute
                      \ifnum \hour<12 \ AM\else \ PM\fi
         \else \advance \hour by -12 \number\hour:%
                      \ifnum \minute<10 0\fi
                      \number\minute \ PM\fi}
\newcommand\omt[1]{}
\newtheorem{theorem}{Theorem}
\newtheorem{myexample}{Example}[section]
\newtheorem{example}[myexample]{Example}
\newcounter{mylemma}
\newtheorem{lemma}[mylemma]{Lemma}
\begin{document}

\title{Frugal Streaming for Estimating Quantiles:One (or two) memory suffices}

\numberofauthors{3}
\author{
\alignauthor
Qiang Ma\\
       \affaddr{Rutgers University}\\
       \affaddr{Piscataway, NJ 08854, USA}\\
       \email{qma@cs.rutgers.edu}
\alignauthor
S. Muthukrishnan\\
       \affaddr{Rutgers University}\\
       \affaddr{Piscataway, NJ 08854, USA}\\
       \email{muthu@cs.rutgers.edu}
\alignauthor 
Mark Sandler\\
       \affaddr{Google Inc.}\\
       \affaddr{New York, NY 10011, USA}\\
       \email{sandler@google.com}
}

\maketitle

\begin{abstract}
Modern applications require processing streams of data for estimating statistical quantities such as quantiles with small amount of memory. In many such applications, in fact, one needs to compute such statistical quantities for each of  a large number of groups, which additionally restricts the amount of memory available for the stream for any particular group. We address this challenge and introduce {\em frugal streaming}, that is algorithms that work with tiny -- typically, sub-streaming -- amount of memory per group.

We design a frugal algorithm that  uses {\em only one} unit of memory per group to compute a quantile for each group. For stochastic streams where data items are drawn from a distribution independently, we analyze and show that the algorithm finds an approximation to the quantile rapidly and remains stably close to it. We also propose an extension of this algorithm that
uses {\em two} units of memory per group.  We show with extensive experiments with real world data from HTTP trace and Twitter that our frugal algorithms are comparable to existing streaming algorithms for estimating any quantile, but these existing algorithms use far more space per group and are unrealistic in frugal applications; further, the two memory frugal algorithm converges significantly faster than the one memory algorithm.
\end{abstract}
\section{Introduction}
\label{intro}

Modern applications require processing streams of data for estimating statistical quantities such as quantiles with small amount of memory. A typical application is in IP packet analysis systems such as  Gigascope~\cite{Cranor03gigascope} where an example of a query is to find the median packet (or flow) size for IP streams from some given IP address. Since IP addresses send millions of packets in reasonable time windows, it is prohibitive to store all packet or flow sizes and estimate the median size. Another application is in social networking sites such as Facebook or Twitter where there are rapid updates from users, and one is interested in median time between successive updates from a user.  In yet another example, search engines can model their search traffic and for each search term, want to estimate the median time between successive instances of that search.

Motivated by applications such as these,  there has been extensive work in the database community on theory and practice of approximately estimating quantiles of streams with limited memory (\eg~\cite{Arasu04approxcounts, Lin04quantilesummaries, Babcock03maintainvarandmedian, Cormodbiasedquan, Manku98approxs,Agrawal95aone-pass,AlsabtiRanka-1997, Guha_streamorder, Shrivastava04mediansandbeyong, Cormode200558, Gilbert20021287369, Greenwald01onlinequantile}).  Taken together, this body of research has generated methods for
approximating quantiles to $1+\epsilon$ approximation with space roughly $O(1/\epsilon)$ in various models of data streams.

Our work here begins with our experience that while the algorithms above are useful, in reality, they get used within GROUPBYs, that is, there are a large number of groups and each group defines a stream within which we need to compute quantiles. In example applications above, this is evident. In IP analysis, one wishes to find median packet size from {\em each} of the source IP addresses, and
therefore the number of ``groups'' is $2^{32}$ (or $2^{128}$). Similarly, in social network application, we wish to compute the median time between updates for {\em each} user, and the number of users is in $100$'s of millions for Facebook or Twitter. Likewise, the number of ``groups'' of interest to search engines is in $100$'s of millions of search terms.  Now, the bottleneck of high speed memory manifests in a different way. We can no longer allocate a lot of memory to any of the groups! In real systems such as Gigascope, low level aggregation engines keep in memory as many groups as they can and rely on higher level aggregation to aggregate partial answers from various groups, which ends up essentially forcing the higher level aggregator to work as a high speed streamer, and proves ineffective.

Motivated by this, we  introduce the new direction of {\em frugal streaming}, that is streaming algorithms that work with tiny amount of memory per group, memory that is far less than is used by typical streaming algorithms.. In fact, we will work with $1$ or $2$ memory locations per group. Our contributions are as follows.

\begin{itemize}
\item
We present two frugal streaming algorithms for estimating a quantile of a stream. One uses $1$ unit of memory for the data stream item, and the other uses $2$ units of memory.

\item
For stochastic streams, that is streams where each item is drawn independently from a distribution, we can mathematically analyze and show how our algorithms converge rapidly to the desired quantile and how they stably oscillate around the quantile as stream progresses.

\item
We evaluate our algorithms on synthetic as well as real datasets from HTTP trace and Twitter.  In all cases, our frugal streaming algorithms perform accurately and quickly. Regular streaming algorithms known previously either are highly inadequate given our memory constraints or need significantly more memory to be comparable in accuracy. Further, our frugal algorithms have an intriguing ``memoryless'' property. Say the stream abruptly changes and now represents a new distribution;
 {\em irrespective of the past}, at any given moment, our frugal algorithms move towards the
 median of the new distribution without waiting for the new streaming items to drown out the old median. We also experimentally evaluate the performance of our frugal streaming algorithms with changing streams.
\end{itemize}

In Section~\ref{sec:pre} we present definitions and notations.  We present our $1$ unit
memory  frugal streaming algorithm in Section~\ref{sec:algs}. It is analyzed for stochastic streams in Section~\ref{sec:analysis} to give insights about its speed in approaching true quantile and its stability in the long run. Section~\ref{sec:extension} gives a $2$ unit memory frugal streaming algorithm. We discuss related algorithms and present our extensive experimental study in Section~\ref{sec:algstocompare} and ~\ref{sec:evaluations}. Section~\ref{sec:conclusion} has concluding remarks.

\section{Background and Notations}
\label{sec:pre}

Suppose values in domain $D$ are integers \footnote{For domains with non-integer values, their values can be rewritten to keep desired precision and scale up altogether to integers.} distributed over $\{1, 2, $ $3, \ldots, N\}$. Given a random variable $X$ in domain $D$, denote its cumulative distribution function (CDF) as $F(x)$, and its quantile function as $Q(x)$. In other words,  $F(Q(x))=x$ if CDF is strictly monotonic. 

$h$-th $p$-quantile is $x$ such that $Pr(X<x) = F(x) = \frac{h}{p}$, for convenience we use  $\frac{h}{p}$-quantile for the $h$th $p$-quantile.

$S$ is a sampled set from $D$. Define a rank function that gives the number of items in $S$ which are smaller than $x$, $R(x) = |S'|$ where $S' = \{ s_i\in S, s_i < x\}$. So when size of $S$ grows to infinity, $F(x) = \frac{R(x)}{|S|}$.

In this paper we consider rank $p$-quantiles, so the $\frac{h}{p}$-quantile approximation returned by algorithm is considered correct even if the approximation is not in value domain $D$. For example, if $D$ is distributed over two values $1$ and $1000$ with equal probabilities. Under value $\frac{1}{2}$-quantile, an estimation at $1000$ would be considered accurate (throughout our paper, upper median is used for even sample sizes). But any value between $1$ and $1000$ can also give us good estimation in terms of ranking.

Throughout when we refer to memory use of algorithms, each memory unit has sufficient bits to store the input domain, that is, each memory unit is $\log N$ bits. This is standard in data stream literature where a method uses $f$ words, it is really $f$ words each of which has sufficient bits to store the input, or $f\log N$ bits.

\renewcommand{\algorithmicrequire}{\textbf{Input:}}
\renewcommand{\algorithmicensure}{\textbf{Output:}}
\renewcommand{\algorithmiccomment}[1]{// #1}
\begin{algorithm}[tb!]
\caption{\oneUmedian}
\label{alg:1umedian}
\begin{algorithmic}[1]
\REQUIRE Data stream $S$, 1 unit of memory $\tilde{m}$
\ENSURE $\tilde{m}$
\STATE Initialization $\tilde{m} = 0$
\FOR {\textbf{each} $s_i$ in $S$}
  \IF{$s_i > \tilde{m}$}
   \STATE $\tilde{m} = \tilde{m} + 1$;
  \ELSIF{$s_i < \tilde{m}$}
   \STATE $\tilde{m} = \tilde{m} - 1$;
  \ENDIF
\ENDFOR	
\end{algorithmic}
\end{algorithm}

\section{Frugal Streaming Algorithm}
\label{sec:algs}

We start from median estimation problem and then generalize our algorithms to estimate any quantile of $S$.

\subsection{1 Unit Memory Algorithm to Estimate Median}
\label{subsec:1mem}

Our algorithm maintains only one unit of memory  $\tilde{m}$ which contains its estimate for the stream median, $m_S$. When a new stream item $s_i$ arrives, consider what our algorithm can do? Since it has no memory of the past beyond  $\tilde{m}$, it can do very little. The algorithm ``drifts'' towards the direction indicated by the new stream item. C-style pseudo code of this algorithm is described in Algorithm~\ref{alg:1umedian}, \oneUmedian.

\begin{example}
\textit{To illustrate how \oneUmedian works, let us consider the example in Figure~\ref{fig:stream_example}.
For the first 2 stream items \{$s_1$ = 4, $s_2$ = 2\} the stream median $m_S$ is 4, when the third item $s_3=1$ comes, the stream median $m_S$ becomes 2. The estimated median from \oneUmedian algorithm starts from $\tilde{m}_0=0$, and gets updated on each arriving stream item. For example, when $s_4 = 5$ comes, it is larger than $\tilde{m}_3$ whose value is 1, therefore $\tilde{m}_4 = \tilde{m}_3+1 = 2$. In this example, $\tilde{m}$ starts from 0, and after reading 5 items from the stream it reaches the stream median for the first time.}
\label{ex:continuous_example}
\end{example}

\begin{figure}[tb!]
\centering
\epsfig{file=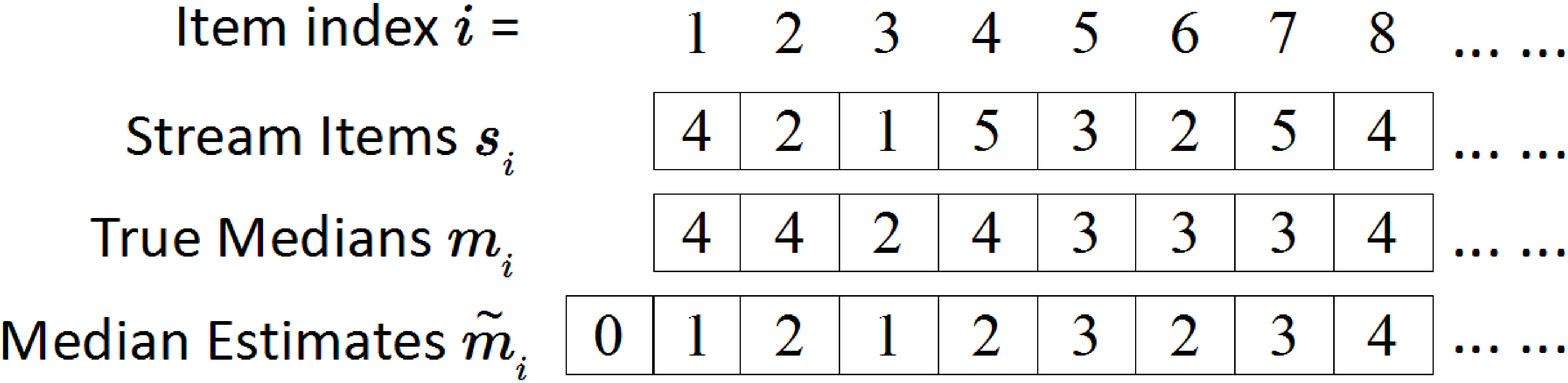, height=0.8in, width=80.7mm}
\vspace{-0.1in}
\caption{Estimate stream median}
\label{fig:stream_example}
\end{figure}

In Example ~\ref{ex:continuous_example}, values in the stream are contiguous without gaps. So the approximations from \oneUmedian algorithm can give accurate value $\frac{1}{2}$-quantiles, and $\tilde{m}_5$, $\tilde{m}_7$ and $\tilde{m}_8$ are correct approximations for stream medians.
Let us look at another example below where \oneUmedian algorithm gives accurate estimates in terms of rank $\frac{1}{2}$-quantile approximation.

\begin{example}
\textit{In Figure~\ref{fig:discrete_example}, the stream median is $10$ after seeing the first $2$ items. \oneUmedian gives median approximations $2$ or $3$ after updating on those two items. Although $2$ or $3$ are not in the value domain of this stream, it satisfies the rank $p$-quantile definition.}
\label{ex:discrete_example}
\end{example}

\begin{figure} [tb!]
\centering
\epsfig{file=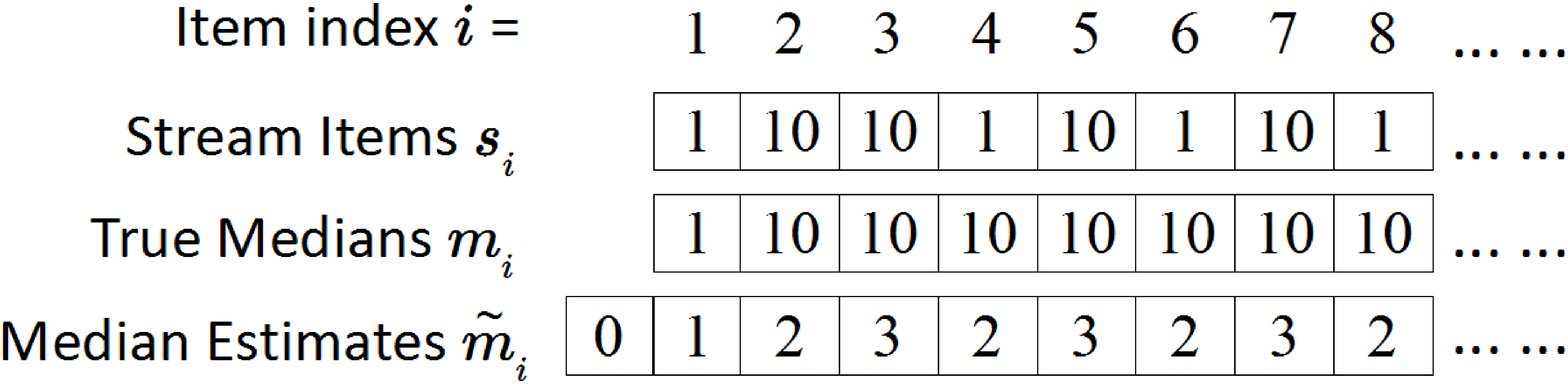, height=0.8in, width=80.7mm}
\vspace{-0.1in}
\caption{Stream from a gapped  domain}
\label{fig:discrete_example}
\vspace{-0.15in}
\end{figure}

\subsection{1 Unit Memory to Estimate Any Quantile}
\label{subsec:1memquantile}

Following the same intuition as above, we can use $1$ unit memory to estimate any $\frac{h}{k}$-quantile, where $\mathit{1 \leq h \leq k-1}$. If current stream item is larger than estimation, we need to increase estimation by $1$; otherwise, we need to decrease estimation by $1$. Up to this point it is the same as \oneUmedian  algorithm. The trick to generalize to $\frac{h}{k}$-quantile is that not every stream item seen will cause an update. If current stream item is larger than estimation, an increment update will be triggered only with probability $\frac{h}{k}$. The rationale behind it is that if we are estimating $\frac{h}{k}$-quantile, and if current estimation is at stream's true $\frac{h}{k}$-quantile, we will expect to see stream items larger than current estimation with probability $1-\frac{h}{k}$. If the probability of seeing larger stream items is greater than $1-\frac{h}{k}$, it is caused by the fact that current estimation is smaller than stream's true $\frac{h}{k}$-quantile. Similarly, a smaller stream item will cause a decrement update only with probability $1-\frac{h}{k}$. Our general $1$ unit memory quantile estimation algorithm is described in Algorithm~\ref{alg:1uquantile}, \oneUany.

We need to make a few observations from this algorithm. Besides $\tilde{m}$, this algorithm uses $rand$ and $\frac{h}{k}$. Notice that we can implement the algorithm without explicitly storing $rand$ value,
$\frac{h}{k}$ is a constant across all the groups, no matter how many, and can be kept in registers.

\begin{algorithm}[tb!]
\caption{\oneUany}
\label{alg:1uquantile}
\begin{algorithmic}[1]
\REQUIRE Data stream $S$, $h$, $k$, 1 unit of memory $\tilde{m}$
\ENSURE $\tilde{m}$
\STATE Initialization $\tilde{m} = 0$ 
\FOR {\textbf{each} $s_i$ in $S$}
	\STATE $rand$ = random(0,1); \COMMENT get a random value in [0,1]
  \IF{$s_i > \tilde{m}$ \AND $rand > 1 - \frac{h}{k}$}
   		\STATE $\tilde{m} = \tilde{m} + 1$;
  \ELSIF{$s_i < \tilde{m}$ \AND $rand > \frac{h}{k}$}
   		\STATE $\tilde{m} = \tilde{m} - 1$;
  \ENDIF
\ENDFOR	
\end{algorithmic}
\end{algorithm}

Update taken by $\tilde{m}$ in Algorithm~\ref{alg:1uquantile} is 1, it is small change at each step when the stream quantile to estimate is large. When it is allowed one extra unit of memory, we can use it to store the size of update to take, denoted as $step$. Extension to two unit memory algorithm is to be presented in Section \ref{sec:extension}. 

\section{Analysis}
\label{sec:analysis}

Our frugal algorithm for estimating a quantile can be arbitrarily bad on worst case streams. This is expected because our algorithm has no memory of the past. One type of such worst case streams is that the true stream quantile value to be estimated has high probably in its underlying distribution. Therefore even if current estimation is at true stream quantile, a minimum update of 1 to quantile estimation will cause large change in rank quantile error. Also any adversary can remember the entire past and constantly mislead our algorithm.
For example, the order of stream items can affect the estimation.

\begin{example}
\textit{In this example, Figure \ref{fig:bad_example}, stream items are in ascending order. Median estimation of \oneUmedian, $\tilde{m}$, starts from value 0. Every $s_i$ is larger than $\tilde{m}_{i-1}$, so that $\tilde{m}$ gets increased on very item. These median approximations are incorrect since they do not give correct value or rank quantile estimations.}
\end{example}

Indeed, any frugal streaming algorithm for any problem is likely to face such lower bounds. The real intuition and strength of our algorithm comes from elsewhere. We say a stream is {\em Stochastic} if each stream item is drawn from some
distribution $D$, independently and randomly from other stream items. We will analyze and show that our algorithm quickly converges to an estimate of the target quantile, and further, stably remains in the neighbourhood of the quantile as stream progresses.

\begin{figure}[tb!]
\centering
\epsfig{file=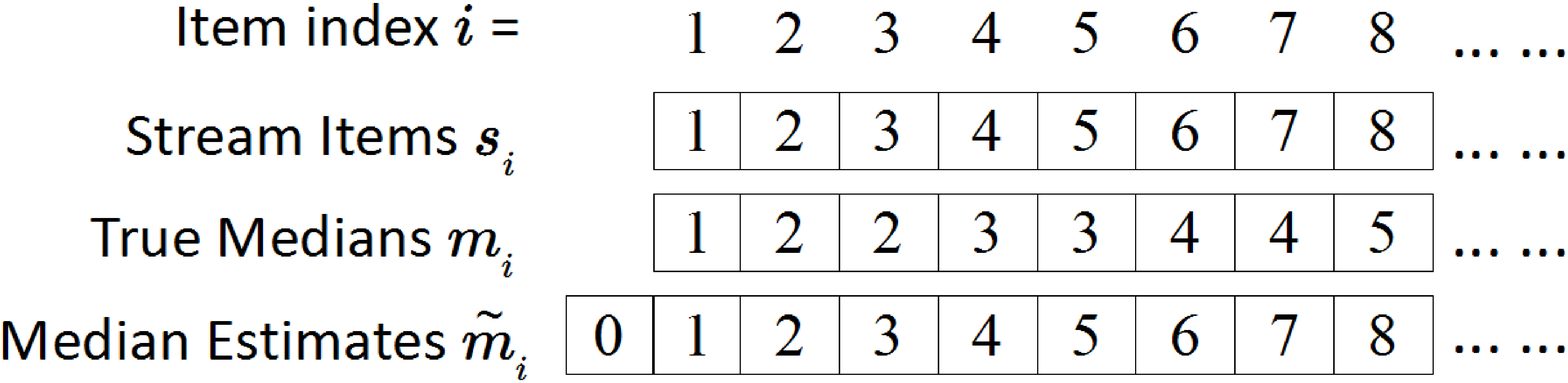, height=0.8in, width=80.7mm}
\vspace{-0.1in}
\caption{Stream items in ascending order}
\label{fig:bad_example}
\vspace{-0.15in}
\end{figure}

\subsection{\textbf{Approaching Speed}}
\label{subsec:approach}

For our 1 memory algorithm, each update size is 1. At any time $t_i$, our algorithm estimation has non-zero probabilities to move towards or away from true quantile. Therefore for sufficiently large $t$, the probability that algorithm estimation moves continuously in one direction has very low probability. When current algorithm estimation is far away from true quantile, the speed of approaching true quantile is high, since every update is highly biased towards true quantile. But as the estimation gets closer to true quantile, the bias to move towards true quantile gets weaker so the speed of approaching true quantile is low. In other words, we are likely to see algorithm estimation showing an oscillating trajectory towards true quantile. The analysis of our algorithm is non-trivial and challenging because the rate of the convergence to an estimate is not constant and depends on number of varying factors. We rely on the concept of stochastic dominance and we show that in fact the algorithm will approach the true quantile with linear speed.

\newcommand{\vareps}{\varepsilon}
\newcommand{\cdf}{F}
\newcommand{\qf}[1]{Q(#1)}
\renewcommand{\Pr}[1]{{\bf Pr}\left[#1\right]}
\newcommand{\Ex}[1]{{\bf E}\left[#1\right]}
\newcommand{\ploc}[1]{y(#1)}
\newcommand{\unknown}{?}
\newcommand{\floor}[1]{\left\lfloor#1\right\rfloor}
\newcommand{\xx}{x}
\newcommand{\XX}{\tilde{m}}
\newcommand{\half}{\frac{1}{2}}
\newcommand{\PROOF}{\noindent \emph{Proof:} }

Recall our notations from Section~\ref{sec:pre}, $\cdf(t)$ is the $CDF$ of distribution, $\qf{x}$ is quantile. Let $\xx_i$ be an indicator variable for the direction of $i$-th step of the algorithm, where $\xx_i=1$ for increment and $\xx_i=-1$ for decrement. Let $\XX_t = \sum_{i=1}^t x_i$. In other words $\XX_t$ is the estimation of the quantile at time $t$. Assume $|\cdf(i) - \cdf(i+1) |\le \delta$, so $\delta$ is the maximum single location probability in distribution and $0\leq \delta < 1$.
Let algorithm estimate $\frac{h}{k}$ quantile,
whose value is denoted as $M$. Assume algorithm estimation starts from position $\XX_0$, where $\XX_0 < M$. The distance from starting position to true quantile is $M - \XX_0$, but the analyses trivially generalize to the case where the distance to the true quantile is $M$.

\begin{lemma}
For median estimation, assume algorithm estimation starts from position $\XX_0$, where $F(\XX_0) < \frac{1}{2} - \delta$.
After $T = \frac{M|\log {\vareps}|}{\delta}$ steps of algorithm, the probability that $\cdf(\XX_t) < \frac{1}{2} - \delta$ for all $t < T$  is at most $\vareps$. In other words, after $O(M)$ steps, with high probability
the algorithm has crossed vicinity of the true quantile, $\frac{1}{2}-\delta$, at least once.
\label{lemma:left}
\end{lemma}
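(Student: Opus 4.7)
The overall strategy is a drift-plus-concentration argument combined with stochastic dominance. First, note that whenever $\cdf(\XX_t) < \half - \delta$, the one-step distribution of the algorithm has a positive drift toward the ``good band'' $\{x : \cdf(x) \geq \half - \delta\}$: since the point mass at $\XX_t$ is at most $\delta$, we have $\Pr{s_i > \XX_t} = 1 - \cdf(\XX_t) - \Pr{s_i = \XX_t} \geq \half$, while $\Pr{s_i < \XX_t} = \cdf(\XX_t) < \half - \delta$. Consequently $\Ex{\XX_{t+1} - \XX_t \mid \XX_t} \geq \delta$ throughout the bad region, which should drive the estimate into the band in $O(M/\delta)$ steps in expectation; the remaining work is upgrading this to the high-probability hitting-time statement of the lemma.

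To do so, I would couple $(\XX_t)$ with a simpler auxiliary walk $(Y_t)$ on the integers: set $Y_0 = \XX_0$, and let each step of $Y$ equal $+1$ with probability $\half$, $-1$ with probability $\half - \delta$, and $0$ with probability $\delta$ (so its mean step is $\delta$ and its values lie in $\{-1,0,+1\}$). The two inequalities above imply that, for every $\XX_t$ in the bad region, the one-step distribution of $\XX_{t+1} - \XX_t$ stochastically dominates the step distribution of $Y$. A standard monotone inverse-CDF coupling then yields $\XX_t \geq Y_t$ for every $t$ up to the first time either process reaches the good band; in particular, the hitting time $\tau_{\mathrm{band}}$ of the good band by $(\XX_t)$ is stochastically dominated by the hitting time of level $M$ by $(Y_t)$.

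It remains to bound the probability that $Y$ has not reached $M$ by time $T$. Since $Y_T - Y_0$ is a sum of $T$ i.i.d.\ $\{-1,0,+1\}$-valued random variables with mean $\delta$, Hoeffding's inequality gives
\[
\Pr{Y_T - Y_0 \leq \delta T - u} \leq \exp\!\left(-\frac{u^2}{2T}\right).
\]
Plugging in $T = M|\log \vareps|/\delta$ and $u = \delta T - M = M(|\log \vareps| - 1)$ makes the exponent of order $M\delta|\log \vareps|$, which is at least $|\log \vareps|$ in the regime the lemma targets ($M\delta$ bounded below by a constant), so the right-hand side is at most $\vareps$. Combining with the coupling yields $\Pr{\cdf(\XX_t) < \half - \delta \text{ for all } t < T} \leq \Pr{\tau_{\mathrm{band}} > T} \leq \vareps$, as required.

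The main obstacle is the stochastic-dominance step: the one-step distribution of $\XX_t$ depends both on its position and on possibly irregular point masses of the source distribution, so one must verify the dominance over the step distribution of $Y$ uniformly across all positions in the bad region before invoking the monotone coupling. A secondary nuisance is that the plain Hoeffding calculation above directly yields the $\vareps$ tail only when $M\delta$ is at least a constant; in the complementary regime one recovers the bound by decomposing the run into $\Theta(|\log \vareps|)$ independent phases of length $\Theta(M/\delta)$, showing a constant crossing probability per phase, and taking a union bound.
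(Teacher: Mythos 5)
Your proof takes essentially the same approach as the paper's: a per-step drift of at least $\delta$ while $\cdf(\XX_t) < \half - \delta$, a stochastic-dominance coupling to an i.i.d.\ auxiliary walk with the same drift, and a Hoeffding bound to control the hitting time. The packaging differs only superficially (the paper inserts an intermediate process $y_i$ before the dominated i.i.d.\ process $z_i$ and invokes a union bound over $t$ that a direct bound on $Y_T$, as you use, makes unnecessary); your closing observation that the Hoeffding step closes cleanly only when $M\delta$ is bounded below by a constant is a legitimate caveat that the paper's own calculation also implicitly requires but does not spell out.
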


\begin{proof}
Let $M' = \qf{\frac{1}{2} - \delta}$. Let us compute the expectation of a move whenever the algorithm is below $M'$.
$$\Pr{x_i=1}  \ge \frac{1}{2} ( 1 - (\frac{1}{2} - \delta) ) = \frac{1}{2} - \frac{1}{2^2} +  \delta* \frac{1}{2} $$ 
we denote it by $\theta$, 
then
$$\Pr{x_i = -1 } \le (1 - \frac{1}{2}) (\frac{1}{2} - \delta) = \theta - \delta$$
Therefore we have
\begin{equation}
\label{eq:Exxi}
\Ex{x_i} \ge \delta 
\end{equation}
  In other words the expected shift of each $x_i$ before it hits $M'$ is then at least $\delta$.
To prove our lemma, we therefore can use tail inequalities to bound the deviation of $\XX_t = \sum x_i$ from the expectation. The main difficulty, however arises from the
fact $x_i$ are not independent from each other and the constraint \eqref{eq:Exxi} holds only when $\XX_t \le M'$.
Consider an arbitrary sequence of moves $x_i$. Define $y_i = x_i$ for all $i < i_0$, where $i_0$ is the time where $\XX_{i_0}$ crossed $M'$ for the first
time, and $y_i = 1$ with probability $\theta$, $y_i = -1$ with probability $\theta - \delta$, and 0  otherwise.
Similarly we define $Y_t=\sum y_i$ for all $i < i_0$, where $i_0$ is the time where $\XX_{i_0}$ crossed $M'$ for the first
time. Then we have $\Pr{\XX_i < M' ,~\forall i \in [T]} = \Pr{Y_i < M' ,~\forall i \in [T]}$. Therefore it is enough for us to prove our statement
for $Y_i$. However, $Y_i$ are still not necessarily independent from each other, before they cross $M'$, however all of them satisfy $\Ex{y_i} \ge \vareps$ and
$\Pr{y_i = 1} \ge \theta$, and $\Pr{y_i = -1} \le \theta - \delta$.
 Define $z_i$ (and $Z_i$ respectively), such that $z_i$ is stochastically dominated by $y_i$ and each $z_i$ is 1 with probability $\theta$ and $-1$ with probability
$\theta - \delta$. Using Hoeffding inequality we have:
$$
\Pr{|Z_t - \Ex{Z_t}| > C}  \le \exp ({-\frac{tC}{2}})
$$
using the fact $$\Ex{Z_t} \ge \delta  t \ge M |\log \vareps| = M  - (M \log \vareps + M)$$ and using $C = (M + M\log \vareps)$ and using union bound over all $t$ we have desired result immediately
for $Z_t$, using the fact that $Y_t \ge Z_t$ we have that probability $Y_t$ never crosses the bound is less than $\vareps$ and hence lemma holds.
\end{proof}

Note, that our constraints are spelled in terms of probability mass inequality rather than absolute error. This is required, since  for any function $f(M)$, it is possible to devise a distribution, such that
the algorithm will be $f(M)^2$ far away from true quantile in absolute steps, and yet it will be very close to it in terms of probability mass.

\begin{lemma}
\label{lemma:right}
For median estimation, algorithm estimation starts from a position $\XX_0$, where $F(\XX_0) > \frac{1}{2} + \delta$.
After $T=\frac{M|\log {\vareps}|}{\delta}$ steps of algorithm, the probability that $\cdf(\XX_t) > \frac{1}{2} + \delta$ for all $t < T$  is at most $\vareps$. \omt{In other words, after $O(M)$ steps, with high probability
the algorithm has crossed vicinity of the true quantile at least once.}
\end{lemma}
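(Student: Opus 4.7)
The plan is to mirror the proof of Lemma \ref{lemma:left} by exploiting the symmetric structure of the random walk when the estimate sits above the true median. First I would set $M'' = \qf{\half + \delta}$ and compute the one-step drift whenever $\XX_t > M''$. A new stream item exceeds the estimate with probability at most $\half - \delta$ and is smaller than it with probability at least $\half + \delta$, so the increment/decrement rule gives $\Pr{x_i = -1} \ge \half(\half + \delta)$ and $\Pr{x_i = 1} \le \half(\half - \delta)$. Writing $\theta = \half(\half + \delta)$, these inequalities yield $\Ex{x_i} \le -\delta$, so the drift points toward $M''$ with the same magnitude as in the previous lemma, but in the opposite direction.

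Next I would adapt the stochastic-dominance coupling from Lemma \ref{lemma:left} by flipping the roles of $+1$ and $-1$. Let $i_0$ be the first time the trajectory crosses $M''$ from above; set $y_i = x_i$ for $i < i_0$, and for $i \ge i_0$ draw $y_i$ to be $-1$ with probability $\theta$, $+1$ with probability $\theta - \delta$, and $0$ otherwise. This construction preserves the event of interest, namely $\Pr{\XX_i > M'',\; \forall i < T} = \Pr{Y_i > M'',\; \forall i < T}$ where $Y_t = \sum_{i \le t} y_i$, so it is enough to bound $Y_t$. I would then dominate $Y_t$ by an i.i.d. sequence $Z_t = \sum z_i$ whose per-step distribution matches the downward-drift marginal of $y_i$ but with independence restored, directly analogous to the $z_i$ used before with the sign of the drift reversed.

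A direct application of Hoeffding's inequality to $Z_t$ around its mean $\Ex{Z_t} \le -\delta t$, followed by the same choices $T = M|\log \vareps|/\delta$ and $C = M + M\log \vareps$ as in Lemma \ref{lemma:left} and a union bound over $t \in [T]$, gives that $Z_t$ drops below $-M$ for some $t < T$ with probability at least $1 - \vareps$; the dominance transfers the crossing to $Y_t$ and hence to $\XX_t$, finishing the proof. The main obstacle is the same one overcome in Lemma \ref{lemma:left}: the drift inequality holds only in the region $\XX_t > M''$ and the $x_i$ are not independent. The flipped coupling bridges both difficulties simultaneously, so essentially no new tail-bound computation is required beyond reversing signs in the earlier argument.
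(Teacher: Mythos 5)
Your proposal is correct and is exactly the sign-flipped mirror of Lemma~\ref{lemma:left} that the paper intends when it writes ``Proof is similar to Lemma~\ref{lemma:left}'': the drift bound $\Ex{x_i}\le-\delta$ above $M''=\qf{\half+\delta}$, the coupled sequence $y_i$ that freezes at the first downward crossing, the stochastically dominating i.i.d.\ sequence $z_i$, and the Hoeffding-plus-union-bound step all correspond term for term. Nothing new needs to be added.
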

\begin{proof}
Proof is similar to Lemma \ref{lemma:left}.
\end{proof}

\begin{theorem}
\label{thm:approach}
For median estimation, algorithm estimation starts from a position $\XX_0$, where $F(\XX_0)$ is outside of region $[\frac{1}{2} - \delta, \frac{1}{2}+ \delta]$.
After $T=\frac{M|\log {\vareps}|}{\delta}$ steps the algorithm, the probability that $\cdf(\XX_t)$ is outside of this close region $[\frac{1}{2} - \delta, \frac{1}{2}+ \delta]$ for all $t < T$  is at most $\vareps$. \omt{In other words, after $O(M)$ steps, with high probability
the algorithm has crossed vicinity of the true quantile at least once.}
\end{theorem}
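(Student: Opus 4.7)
The plan is to derive Theorem~\ref{thm:approach} directly by case analysis on the side of the quantile vicinity $[\frac{1}{2}-\delta,\frac{1}{2}+\delta]$ on which the initial estimate $\XX_0$ lies, invoking the two preceding lemmas as black boxes. Since the theorem's event (that $\cdf(\XX_t)$ stays outside the entire closed region) is the disjunction of two mutually exclusive sub-events (staying strictly below $\frac{1}{2}-\delta$ or staying strictly above $\frac{1}{2}+\delta$), and the walk started on one side has to \emph{cross through} the interior to reach the other side, I would argue that the process cannot leave one side without first entering the vicinity.

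Concretely, first I would assume $\cdf(\XX_0) < \frac{1}{2}-\delta$. Because each step $x_i$ changes $\XX_t$ by exactly $\pm 1$ (or $0$), the trajectory $\XX_t$ takes integer values and cannot jump over any intermediate value. In particular, since $\cdf$ is monotone non-decreasing, the first time $\cdf(\XX_t)$ exits the half-line $(-\infty,\frac{1}{2}-\delta)$ it must land in the closed region $[\frac{1}{2}-\delta,\frac{1}{2}+\delta]$ (it cannot skip past it in one unit step without first assuming a value whose $\cdf$ lies in the vicinity). Hence the event $\{\cdf(\XX_t)\notin[\frac{1}{2}-\delta,\frac{1}{2}+\delta]\text{ for all }t<T\}$ is contained in $\{\cdf(\XX_t)<\frac{1}{2}-\delta\text{ for all }t<T\}$, which by Lemma~\ref{lemma:left} has probability at most $\vareps$.

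The symmetric case $\cdf(\XX_0) > \frac{1}{2}+\delta$ is handled the same way using Lemma~\ref{lemma:right}: the estimate cannot descend below $\frac{1}{2}+\delta$ in $\cdf$-value without first passing through the vicinity, so the bad event is contained in the event bounded by Lemma~\ref{lemma:right}, which again has probability at most $\vareps$. Combining the two cases gives the claimed bound.

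The only subtle point — and essentially the one step that deserves care — is justifying the ``no-skipping'' observation: a priori the unit step in $\XX$ could correspond to a jump in $\cdf$ of size up to $\delta$, and one must be sure that this does not let the walker leap across the whole vicinity $[\frac{1}{2}-\delta,\frac{1}{2}+\delta]$ in a single move. I would handle this by using exactly the same monotonicity/continuity-of-integer-walk argument with the bound $|\cdf(i)-\cdf(i+1)|\le\delta$ from the setup: a single step can shift $\cdf$ by at most $\delta$, which is at most the half-width of the vicinity, so the walker must register at least one intermediate position whose $\cdf$-value lies inside $[\frac{1}{2}-\delta,\frac{1}{2}+\delta]$. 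Once this monotonicity observation is in hand, the rest of the proof is a one-line reduction to the two lemmas.
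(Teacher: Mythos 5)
Your proof is correct and takes essentially the same route as the paper, which simply states that the theorem ``is directly obtained from Lemma~\ref{lemma:left} and Lemma~\ref{lemma:right}.'' Your ``no-skipping'' observation (that $|\cdf(i)-\cdf(i+1)|\le\delta$ and the unit step size prevent the walk from leaping over the vicinity, so the stay-out event coincides with the one-sided event each lemma bounds) is a valid point that the paper leaves implicit, and it is exactly what is needed to make the one-line reduction rigorous.
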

\begin{proof}
Proof is directly obtained from Lemma~\ref{lemma:left} and Lemma~\ref{lemma:right}.
\end{proof}

\omt{
\begin{lemma}
\label{lemma:right}
Algorithm estimation starts from a position larger than $\alpha + \delta$.
After $\frac{M|\log {\vareps}|}{\delta}$ steps of algorithm, the probability that $\cdf(X_t) > \alpha + \delta$ for all $t < T$  is at most $\vareps$. \omt{In other words, after $O(M)$ steps, with high probability
the algorithm has crossed vicinity of the true quantile at least once.}
\end{lemma}

\begin{theorem}
\label{thm:approach}
Algorithm estimation starts from a position outside of a close region $[\alpha - \delta, \alpha + \delta]$.
After $\frac{M|\log {\vareps}|}{\delta}$ steps the algorithm, the probability that $\cdf(X_t)$ is outside of the close region for all $t < T$  is at most $\vareps$. \omt{In other words, after $O(M)$ steps, with high probability
the algorithm has crossed vicinity of the true quantile at least once.}
\end{theorem}
\begin{proof}
Proof is directly obtained from Lemma~\ref{lemma:left} and Lemma~\ref{lemma:right}.
\end{proof}
}

In approaching speed analysis, we do not need assumptions on algorithm's starting estimation. Therefore this actually implies for \oneUany algorithm, quantile estimations adjust to new distribution quantile when underlying distribution changes, regardless of current estimation position. The speed of approaching new distribution quantile \omt{and the expected number of steps needed} can be determined by Theorem~\ref{thm:approach}.

\subsection{\textbf{Stability}}

Next we show that after algorithm estimation once reaches true median, the probability of estimation drifting far away from true median is low. Note that THEOREM \ref{thm:approach} is affecting this estimation drifting process the whole time.

\begin{lemma}
For median estimation, assume current estimation is at true median. After $t$ steps, the probability of the algorithm current position
$$\Pr{\cdf(\XX_t) > \frac{1} 2 +  2\sqrt{\delta \ln \frac{t}{\vareps}}} \le \vareps.$$
\label{lemma:stability-right}
\end{lemma}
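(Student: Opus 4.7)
I mirror Lemma~\ref{lemma:left} in the opposite direction: rather than arguing the walk approaches the median from below, I show that once it arrives there it cannot drift far above within $t$ steps. Set $\Delta = 2\sqrt{\delta\ln(t/\vareps)}$ and $M^* = \qf{\tfrac12 + \Delta}$. Because $|\cdf(i{+}1) - \cdf(i)|\le \delta$, the position gap $L := M^* - M$ is at least $\Delta/\delta$, and the claim is equivalent to $\Pr{\XX_t \ge M^*}\le \vareps$.

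The key observation, symmetric to the one driving Lemma~\ref{lemma:left}, is that whenever $\cdf(\XX_s)\ge \tfrac12 + \alpha$ the conditional drift of the walk in position is of order $-\alpha$. Transported to CDF-space, the chain $v_s := \cdf(\XX_s)$ has bounded increments $|v_{s+1} - v_s|\le \delta$ and a linear restoring drift $\Ex{v_{s+1} - v_s \mid v_s} = O(-\delta(v_s - \tfrac12))$ above the median. This is an Ornstein--Uhlenbeck-like chain whose stationary fluctuations sit at the scale $\sqrt{\delta}$, matching the stated bound up to a logarithmic factor from the time horizon.

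My concrete plan has three steps. (i) Condition on the last index $\tau < t$ with $v_\tau \le \tfrac12 + \Delta/2$, so that throughout the remaining $k = t - \tau$ steps the per-step $v$-drift is at most $-\Omega(\delta\Delta)$. (ii) On that window, decompose $v_t - v_\tau$ into its martingale part (bounded differences $\le \delta$) and its cumulative drift (at most $-\Omega(\delta\Delta k)$), and apply Azuma--Hoeffding to the martingale part; the event $v_t > \tfrac12 + \Delta$ forces the martingale deviation to exceed $\Delta/2 + \Omega(\delta\Delta k)$. (iii) The AM--GM inequality $(1/2 + u)^2/(2u) \ge 1$ at $u = \delta k$ pins the Azuma exponent at $\Delta^2/\delta = 4\ln(t/\vareps)$ uniformly in $k$, so the conditional probability is at most $(\vareps/t)^4$; since this bound is uniform in $\tau$, the law of total probability yields $\Pr{\XX_t \ge M^*} \le (\vareps/t)^4 \le \vareps$.

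\textbf{Main obstacle.} The central subtlety is picking the right space in which to concentrate. In position-space the drift is only $O(\Delta)$ per step while step fluctuations are $O(1)$, so a direct Hoeffding allows deviations of order $\sqrt{t}$ and misses the target scale by a large factor. Switching to CDF-space rescues this: the increment bound drops from $1$ to $\delta$ and the drift becomes proportional to $v_s - \tfrac12$, so the AM--GM balance between the accumulated fluctuation budget $\Delta/2$ and the accumulated drift budget $\delta\Delta k$ extracts the sharp $\sqrt{\delta\ln(t/\vareps)}$ scale. Beyond this change of variables, the remainder is the same coupling-plus-Hoeffding template already used in Lemma~\ref{lemma:left}.
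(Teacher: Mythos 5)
Your plan is essentially the paper's own argument for this lemma: the same threshold $\omega = 2\sqrt{\delta\ln(t/\vareps)}$, the same conversion to a position gap $T = \omega/\delta$ via the $\delta$-Lipschitz CDF, the same last-exit decomposition at the halfway level $\tfrac12 + \omega/2$, the same coupling of the constrained walk with a stochastically dominating iid $\pm1$ walk carrying drift $-\omega$, and the same Hoeffding bound followed by a union over the last-exit index; your AM--GM shortcut to pin the exponent uniformly in the window length simply replaces the paper's explicit case split on $j \le j_0$ versus $j > j_0$, and your CDF-space framing is a cosmetic rewording of the same bookkeeping. One caution on the write-up: phrasing the decomposition as ``condition on $\tau$ and apply the law of total probability'' is slippery, because the conditioning event includes the future constraint $v_s > \tfrac12 + \omega/2$ for $s>\tau$, which biases the walk upward and invalidates the unconditional per-step drift estimate you invoke; the paper sidesteps this by bounding the joint probability $\Pr{A \cap B_\tau}$ directly with an unconditional dominating walk $Z^{(\tau)}$ and union-bounding over $\tau$, which is what you should write when you say you follow the coupling template of Lemma~\ref{lemma:left}.
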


\begin{proof}	
Define $\omega = 2\sqrt{\delta \ln \frac{t}{\vareps}}$. Let us split the interval $[\frac{1}{2}, \frac{1}{2} + \omega ]$ into
two $[\frac{1}{2}, \frac{1}{2} + \omega/2]$ and $[\frac{1}{2} + \omega/2, \frac{1}{2} + \omega]$. Our approach is to show
that once the algorithm reaches the boundary of the first interval, it is very unlikely to continue through the second interval,
without ever dipping back into the first. First of all we note that we need at least $T = \frac{\omega}{\delta}$ more steps of increment than decrement to reach outside of the second interval, and by the way we select the probabilistic weight of the interval,
we will need at least $T/2$ to pass through each.

Consider arbitrary outcome of the algorithm where $\XX_t > T$. Since $x$ changes by at most 1 at every
step, there exists $j$, such that $\XX_j = \frac{T}{2}$. Therefore the entire space of events
can be decomposed based on the value of $j$ where $\XX_j = \floor{T/2}$ and for all $i>j$,
$\XX_i > \XX_j$. Thus:
$$
\begin{array}{rl}
\Pr{\XX_t > T} &=\sum\limits_{j=0}^t \Pr{\XX_t > T,\XX_i> \XX_j,\forall i>j} \\
& \times \Pr{\XX_j = \floor{\frac{T}{2}}} \\
	       &\le\sum\limits_{j=0}^t \Pr{\XX_t > T,\XX_i> \XX_j,\forall i>j}
\end{array}\
$$
let us consider individual term for a fixed $j$ in the sum above. We want to show that each term
is at most $\vareps/t$.
Define $Y^{(j)}_i$ for $i\ge j$, where $Y^{(j)}_i = \XX_j + \sum_{k=j+1}^i y_j$, and
 $y^{(j)}_i = x_i$ if $X_i' > \XX_j$, for all $i' < i$,
and for the remainder of the segment $y^{(j)}_i$ is random variable that is -1
with probability $p = \half + \frac{\omega}2$ and 1 otherwise.
In other words $Y_i$ agrees with $\XX_i$ until $\XX_i = \XX_j$ for the
first time after $j$, after that $Y^{(j)}_i$ becomes independent of $\XX_i$. We have:
$$
\begin{array}{rl}
& \Pr{\XX_t > T, \XX_i > \XX_j, \forall i>j}  \\
&= \Pr{Y^{(j)}_t > T, Y^{(j)}_i > Y^{(j)}_j, \forall i>j}  \\
& \le \Pr{Y^{(j)}_t > T}
\end{array}
$$
therefore it is sufficient to compute an upper bound for  $\Pr{Y^{(j)}_t > T}$ for all $j$. Let $Z^{j}_i$  be
a variable which both stochastically dominates  $Y^{(j)}_i$, and is -1 with probability $p$
and $1$ otherwise. Since $Y^{(j)}_i$ is $-1$ with probability of \emph{ at least} $p$,
so such variable always exists.  Note that $Z^{j}_i$ are independent from each other for all $i$, thus we can use standard \omt{issue} tail inequality to upper bound $Z^{(j)}_t$, and because of the dominance the result will immediately apply to $Y^{(j)}_i$.
Since $Z^{(j)}_i$ only depends on $j$ at the starting point, we can shift it to zero and rewrite out constraint as:
$$
  \sum_{j=0}^{t} \Pr{Z_j > T/2} \le \vareps
$$
where $Z_j$ is defined as sum $\sum_{i=0}^j z_i$, and $z_i$ is -1 with probability $p$ and $1$ otherwise. The expected value of
$Z_j$ is $(1-p) j - p j = (1-2p)j = -\omega j$. Furthermore by our assumption, $\omega \ge \frac{\delta T}{2}$.
 Therefore using Hoeffding inequality we have $\Pr{Z_j > T/2} \le \exp{-\frac{(\omega j + T)^2}{4j }}$.
Thus it is sufficient for us to show that
$$
\exp{-\frac{(\omega j + T)^2}{4j}} \le \frac{\vareps}{t}, \mbox{ for all } j < t
$$
This constraint is automatically satisfied for all $j$ such that
$$j \ge \frac{4}{\omega^2}\ln \frac{t}{\vareps}=j_0.$$ Indeed, if $j > j_0$  we have
$(\omega j+T) / 4j \ge \frac{\omega^2}{4j} \ge \ln t/\vareps$.

On the other hand if $j \le j_0$, then we have
$$
\frac{(\omega j + T)^2} {4j} \ge \frac{T^2 \omega^2}{16\ln t/\vareps}
$$
but $T \ge \omega / \delta$ and substituting the expression for $\omega$ we have:
$$
\frac{T^2 \omega^2}{4\ln t/\vareps} \ge \frac{\omega^4}{16 \delta^2 \ln t/\vareps}  = \ln t/\vareps
$$

Thus $\Pr{Z_j > T/2} \le \vareps/t$, for $j < j_0$, completing the proof.
\end{proof}

\begin{lemma}
\label{lemma:stability-left}
To estimate median, assume current estimation is at true median. After $t$ steps, the probability of the algorithm current position
$$\Pr{\cdf(\XX_t) < \frac{1}{2} -  2\sqrt{\delta \ln \frac{t}{\vareps}}} \le \vareps.$$
\end{lemma}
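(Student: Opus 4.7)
The plan is to mirror the proof of Lemma \ref{lemma:stability-right} applied to the left tail. Setting $\omega = 2\sqrt{\delta \ln \frac{t}{\vareps}}$, I split the target region $[\half - \omega, \half]$ into $[\half - \omega, \half - \omega/2]$ and $[\half - \omega/2, \half]$. As in the right-side argument, to reach a $\cdf$-value below $\half - \omega$ the algorithm needs a net of $T = \omega/\delta$ more decrements than increments, and at least $T/2$ net steps of the correct sign to traverse each sub-interval.

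First I would decompose the event $\{\XX_t < -T\}$ (where $\XX$ is measured relative to the starting position at the median, so that $\cdf(\XX_t) < \half - \omega$) based on the largest index $j \le t$ with $\XX_j = -\floor{T/2}$; for all $i > j$ one then has $\XX_i < \XX_j$. Because $\XX$ changes by at most one per step, such a $j$ exists on every trajectory ending below $-T$, and
\[
\Pr{\XX_t < -T} \;\le\; \sum_{j=0}^{t}\Pr{\XX_t < -T,\; \XX_i < \XX_j \text{ for all } i>j}.
\]

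Next, following the coupling used in Lemma \ref{lemma:stability-right}, I define auxiliary walks $Y^{(j)}_i$ that agree with $\XX_i$ until the trajectory revisits $-\floor{T/2}$ after time $j$, and evolve independently afterwards: $y^{(j)}_i = +1$ with probability $p = \half + \omega/2$ and $-1$ otherwise. Because $\cdf(\XX_i) \le \half - \omega/2$ throughout the relevant left interval, the genuine probability of an increment step is at least $p$, so $Y^{(j)}_i$ is stochastically dominated from below by i.i.d.\ variables $z_i$ that take $+1$ with probability $p$ and $-1$ with probability $1-p$. After shifting the starting point to the origin, it suffices to upper-bound $\Pr{Z_j < -T/2}$ where $Z_j = \sum_{i=1}^{j} z_i$.

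The remaining calculation is a lower-tail Hoeffding bound of the same form as the one used in Lemma \ref{lemma:stability-right}: now $\Ex{Z_j} = \omega j$ and the required deviation is $\omega j + T/2$, giving a bound of the shape $\exp(-(\omega j + T)^2/(4j))$ (with the same constants). Splitting into $j \ge j_0 := \frac{4}{\omega^2}\ln\frac{t}{\vareps}$ and $j < j_0$, and invoking the two short algebraic estimates from the right-side proof almost verbatim, yields a per-term bound of $\vareps/t$; summing over $j$ gives the claimed $\vareps$. I do not expect a real obstacle here since the argument is a sign-reversed image of the previous lemma; the only care required is to flip the direction of the dominating bias so that $z_i$ points \emph{upward} with probability $p$, and to verify that the stochastic-dominance coupling is valid for every $\XX_i$ inside the left neighborhood.
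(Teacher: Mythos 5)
Your proposal is correct and takes the same approach the paper intends: the paper's own proof of Lemma~\ref{lemma:stability-left} is a one-paragraph remark that simply says to repeat the argument of Lemma~\ref{lemma:stability-right} with the interval $[\frac{1}{2}-\omega,\frac{1}{2}]$ split at $\frac{1}{2}-\omega/2$, and you have filled in exactly that sign-reversed decomposition, coupling, and Hoeffding bound (with the bias flipped so the dominating walk drifts upward with probability $p = \frac{1}{2}+\omega/2$). The only nit is the phrase ``stochastically dominated from below,'' which would read more cleanly as ``$Y^{(j)}_i$ stochastically dominates the i.i.d.\ walk $Z^{(j)}_i$,'' so that $\Pr{\XX_t < -T}\le\Pr{Z_t < -T}$ follows in the right direction.
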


\begin{proof}
Following the same reasoning in the proof of LEMMA~\ref{lemma:stability-right}, we can prove that the probability of estimation moving far to the left is small. Where we can split the interval $[\frac{1}{2}-\omega, \frac{1}{2}]$ into two $[\frac{1}{2} - \omega, \frac{1}{2} - \omega/2]$ and $[\frac{1}{2} - \omega/2, \frac{1}{2}]$. We can show that once the algorithm reaches the boundary of the first interval, it is very unlikely to continue through the second interval without ever dipping back into the first.
\end{proof}

\begin{theorem}
To estimate median, assume current estimation is at true median. After $t$ steps, the probability of the algorithm current position
$$\Pr{ \left|\cdf(\XX_t) - \frac{1}{2}\right| >  2\sqrt{\delta \ln \frac{t}{\vareps}}} \le \vareps.$$
\end{theorem}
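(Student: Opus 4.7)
The plan is to obtain the theorem directly as the union bound of Lemma~\ref{lemma:stability-right} and Lemma~\ref{lemma:stability-left}. The event
\[
\left|\cdf(\XX_t) - \tfrac{1}{2}\right| > 2\sqrt{\delta \ln \tfrac{t}{\vareps}}
\]
decomposes as the disjoint union of the two one-sided deviation events
\[
\cdf(\XX_t) > \tfrac{1}{2} + 2\sqrt{\delta \ln \tfrac{t}{\vareps}}
\qquad\text{and}\qquad
\cdf(\XX_t) < \tfrac{1}{2} - 2\sqrt{\delta \ln \tfrac{t}{\vareps}},
\]
so the probability of the two-sided deviation is bounded by the sum of the two one-sided deviation probabilities that we have already proved.

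The one small bookkeeping issue is that summing the two lemmas naively gives an upper bound of $2\vareps$ rather than $\vareps$. I would handle this by simply invoking each lemma with $\vareps' = \vareps/2$ in place of $\vareps$. Doing so replaces the deviation bound by $2\sqrt{\delta \ln(2t/\vareps)}$, which is still $O(\sqrt{\delta \ln(t/\vareps)})$ and is absorbed into the same asymptotic guarantee; since the lemmas are stated up to constants, this is cosmetic, and the statement of the theorem can be read with the understanding that the constant inside the square root is the same as in the lemmas (or, equivalently, one could simply state the conclusion with probability $2\vareps$).

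Thus the proof amounts to one line: apply Lemma~\ref{lemma:stability-right} and Lemma~\ref{lemma:stability-left} (each with parameter $\vareps/2$) and take a union bound. There is no genuine obstacle here; the real technical work was already done in Lemma~\ref{lemma:stability-right}, whose argument Lemma~\ref{lemma:stability-left} mirrors by symmetry (exchanging the roles of increments and decrements and reflecting the intervals $[\tfrac{1}{2}, \tfrac{1}{2}+\omega]$ to $[\tfrac{1}{2}-\omega, \tfrac{1}{2}]$).
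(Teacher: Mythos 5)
Your proposal is correct and matches the paper's proof, which likewise obtains the theorem directly from Lemma~\ref{lemma:stability-right} and Lemma~\ref{lemma:stability-left} by a union bound. Your observation about the $2\vareps$ versus $\vareps$ constant (and the $\vareps/2$ fix) is a fair bookkeeping point that the paper silently elides.
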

\begin{proof}
This theorem is directly obtained from Lemma~\ref{lemma:stability-right} and ~\ref{lemma:stability-left}.
\end{proof}

These properties of median estimation can be generalized to any quantile $\frac{h}{k}$.
\section{Algorithm Extensions}
\label{sec:extension}

\begin{algorithm}[tb!]
\caption{\twoUany}
\label{alg:2uquantile}
\begin{algorithmic}[1]
\REQUIRE Data stream $S$, $h$, $k$, $\tilde{m}$, $\tt{step}$, $sign$
\ENSURE $\tilde{m}$
\STATE Initialization $\tilde{m} = 0$, $\tt{step} = 1$, $sign=1$
\FOR {\textbf{each} $s_i$ in $S$}
  \STATE $rand$ = random(0,1);
  \IF{$s_i > \tilde{m}$ \AND $rand > 1 - h/k$}		
  		\STATE $\tt{step}$ $\mbox{ += } (sign>0)~?~f($$\tt{step}$$) ~: ~-f($$\tt{step})$;
   		\STATE $\tilde{m} \mbox{ += } ($$\tt{step}$$ >0) ~?~\lceil $$\tt{step}$$ \rceil ~: ~ 1$;
	     \IF {$\tilde{m} > s_i$}
	     	\STATE $\tt{step}$$ \mbox{ += } s_i - \tilde{m}$;
	      	\STATE $\tilde{m} = s_i$;
	     \ENDIF
	    \IF{$sign < 0$ \AND $\tt{step}$$>1$}
	     		\STATE $\tt{step}$$ = 1$;
	    \ENDIF
	    \STATE $sign = 1$;
  \ELSIF{$s_i < \tilde{m}$ \AND $rand > h/k$}
  	   	\STATE $\tt{step}$$\mbox{ += } (sign < 0) ~? ~ f($$\tt{step}$$) ~:~ -f($$\tt{step}$$)$;
    	\STATE $\tilde{m} \mbox{ - = } ($$\tt{step}$$ > 0) ~?~ \lceil $$\tt{step}$$ \rceil ~:~ 1$;
        \IF {$\tilde{m} < s_i$}
        	\STATE $\tt{step}$$ \mbox{ += } \tilde{m} - s_i$;
        	\STATE $\tilde{m} = s_i$;
        \ENDIF
        \IF{$sign > 0$ \AND $\tt{step}$$>1$}
        	\STATE $\tt{step}$$ = 1$;
        \ENDIF
        \STATE $sign = -1$;
  \ENDIF
  
\ENDFOR	
\end{algorithmic}
\end{algorithm}

The \oneUany algorithm described in Section \ref{sec:algs} uses 1 unit of memory and is intuitive, and we managed to analyze it; however  it has linear convergence to the true quantile. This is effectively by design, because the algorithm does not have the capability to remember anything except the current location. A simple extension to our algorithm is to keep a current step size in memory, and modify it if the new samples are consistently on one side of the current estimate.\footnote{Another
approach that we do not explore here, is to use multiplicative update on step size instead of additive.}
In this section we describe a $2$ units of memory algorithm that we use in experiments for comparison. 

Generally the algorithm uses two variables to keep quantile estimate and update size, and one extra bit to keep sign, which indicates the increment or decrement direction of estimate. Empirically this  algorithm has much better convergence and stability 
property than 1 unit of memory algorithm, however the precise convergence/stability analysis of it is one of our future work. On the intuitive level the algorithm for finding the median works as follows. As before it maintains the current estimate of median but in addition it also maintains an update $\tt{step}$ that increases or decreases based on the observed values, determined by a function $f$. More precisely, the $\tt{step}$ increases if the next element from the stream is on the same side of the current estimate, and decreases otherwise. When estimation is close to true quantiles, $\tt{step}$ can be decreased to extremely small value. 

The increment and decrement factors to be applied to $\tt{step}$ remains an open problem. $\tt{step}$ can potentially grow to very large values, so the randomness of the order which stream items appear affects estimation accuracy. For example, if let $\tt{step}_i$ be the step value at $i$th update, a multiplicative update of $\tt{step}_{i+1}=2\times \tt{step}_{i}$ might be a good choice for a random order stream, which intuitively needs $O(\log M)$ updates to reach true quantile at distance $M$ from current estimate. However in empirical data periodic pattern might be apparent in the stream, for example social network users might have shorter activity intervals at evening, but longer intervals at early morning. Then $\tt{step}$ can easily get increased to a huge value. It will make the algorithm estimate drift far away from true quantile, hence estimates will have large oscillations. 

Therefore to trade off convergence speed for estimation stability we present a version of 2 units of memory algorithm that applies constant factor additive update to step size, where $f(\tt{step})=1$. Full details of the algorithm are described in Algorithm~\ref{alg:2uquantile}. Lines 4-14 handle stream items larger than algorithm estimation, and lines 15-26 handle smaller stream items. For brevity we only look at lines 4-14 in detail.
Similar to Algorithm \oneUany, the key to make \twoUany able to estimate any quantile is that not every stream item will cause an estimation update, so line 4 enables updates only on ``un-expected'' larger stream items. $\tt{step}$ is cumulatively updated in line 5. Line 6 ensures minimum update to estimation is 1, and $\tt{step}$ size is only applied in update when it is positive. The reason is that when algorithm estimation is close to true quantile, \twoUany updates are likely to be triggered by larger and smaller (than estimation) stream items with largely equal chances. Therefore $\tt{step}$ is decreased to a small negative value and it serves as a buffer for value bursts (\eg~a short series of very large values) to stabilize estimations. Lines 7-10 are to ensure estimation do not go beyond empirical value domain when $\tt{step}$ gets increased to very large value. At the end of the algorithm, we reset $\tt{step}$ if its value is larger than 1 and two consecutive updates are not in the same direction. This is to prevent large estimate oscillations if $\tt{step}$ gets accumulated to a large value. This checking is implemented by lines 11-13.

Note that \oneUany and \twoUany algorithms are initialized by 0, but in practice they can be initialized by the first stream item to reduce the time needed to converge to true quantiles.

\section{Related Work and Algorithms to Compare}
\label{sec:algstocompare}

There has been extensive work in the database community on theory and practice of approximately estimating quantiles of streams with limited memory (e.g..,~\cite{Greenwald01onlinequantile, Babcock03maintainvarandmedian, Arasu04approxcounts, Lin04quantilesummaries, Cormodbiasedquan, Agrawal95aone-pass,AlsabtiRanka-1997, Manku98approxs, Guha_streamorder, Gilbert20021287369, Shrivastava04mediansandbeyong, Cormode200558}). This body of research has generated methods for
approximating quantiles to $1+\epsilon$ approximation with space roughly $O(1/\epsilon)$ in various models of data streams.

We compare our algorithms with existing algorithms that use constant memory for stochastic streams \cite{Guha_streamorder}, and also non-constant memory algorithms described in \cite{Greenwald01onlinequantile, Shrivastava04mediansandbeyong}.
However all the non-constant memory algorithms above use considerably more than 2 persistent variables. While some of the algorithms such as the one described in \cite{Agrawal95aone-pass} have a tuning parameter allowing to decrease memory utilization, the algorithm then performs poorly when used with less than 20 variables. Here we briefly overview the algorithms we compare with.

\subsection{\gk Algorithm}
Greenwald and Khanna \cite{Greenwald01onlinequantile} proposed an online algorithm to compute $\epsilon$-approximate quantile summaries with worst-case space requirement of $O(\frac{1}{\epsilon}log(\epsilon N))$. Greenwald-Khanna algorithm (\gk) maintains a list of tuples ($v_i$, $g_i$, $\vartriangle_i$), where $v_i$ is a value seen from the stream and tuples are order by $v$ in ascending order. $\sum_{j=1}^{i}g_j$ gives the minimum rank of $v_i$, and its maximum rank is $\sum_{j=1}^{i}g_j + \vartriangle_i$. \gk is composed of two main operations which are to insert a new tuple in to tuple list when sees a new value, and do compression on the tuple list to achieve the minimum space as possible. Throughout the updates it is kept invariant that for any tuple we have $\sum_{j=1}^{i}g_j + \vartriangle_i \leq 2\epsilon N$ to ensure the $\epsilon$-approximate query answers. The main difference of our algorithms is that our scenarios do not require the ability to answer any quantile queries, but only a few quantiles are of interest. Hence our advantage is saving space usage by not tracking non-necessary quantiles. In the original \gk algorithm desired $\epsilon$ is accepted as input, and it will use as less space as possible to achieve $\epsilon$-approximate. To make it comparable with our \oneUany and \twoUany, we limit the number of tuples maintained by \gk. When this memory budget is exceeded we gradually increase $\epsilon$ (increment by 0.001) to force compression operation get conducted repeatedly until number of tuples used is within specified budget. In our comparison, we limit the number of tuples to be $t=20$.

\begin{figure*}[t]
\centering
\begin{tabular}{cc}
\epsfig{file=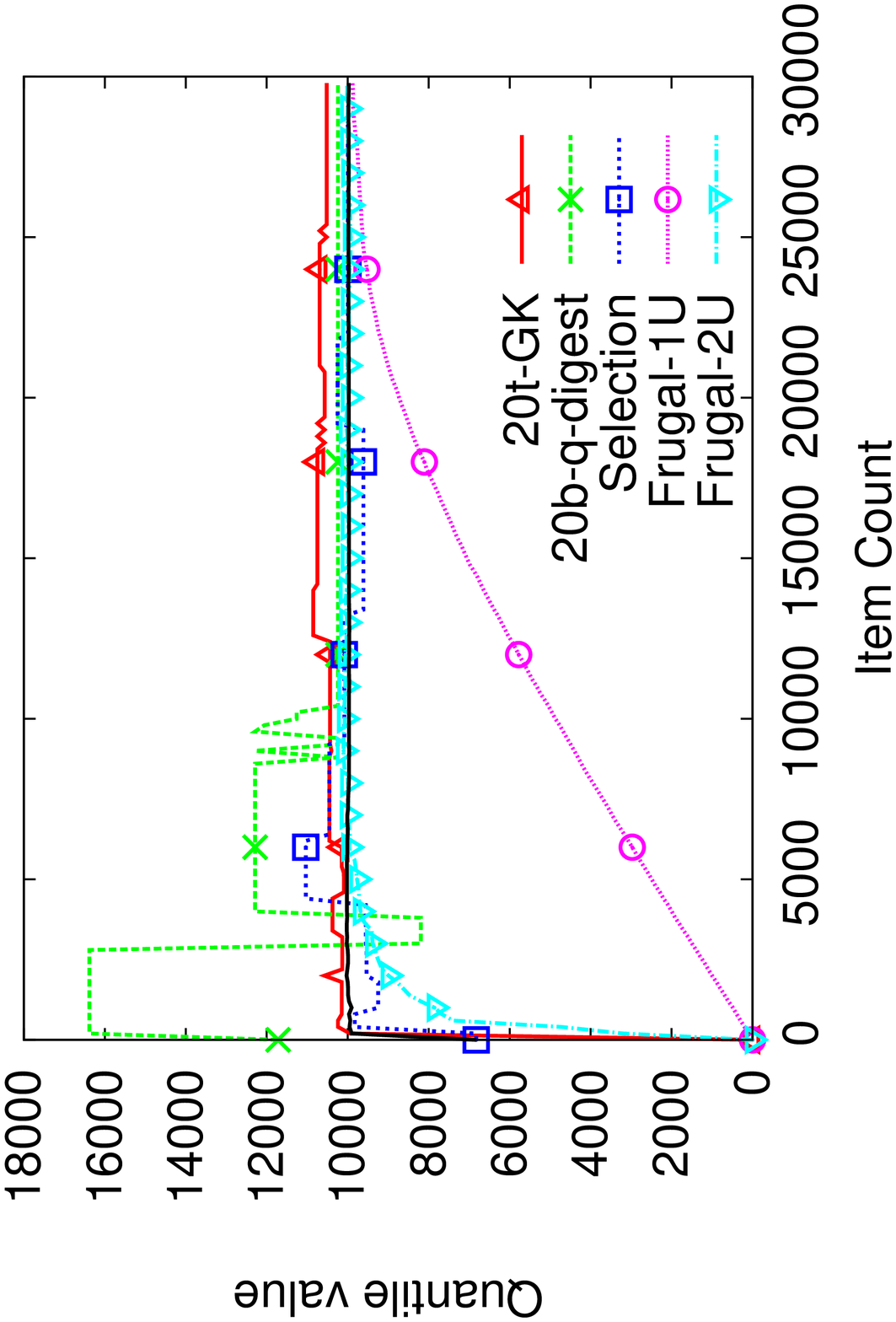, width=\figurewidthJ, angle=-90} &
\epsfig{file=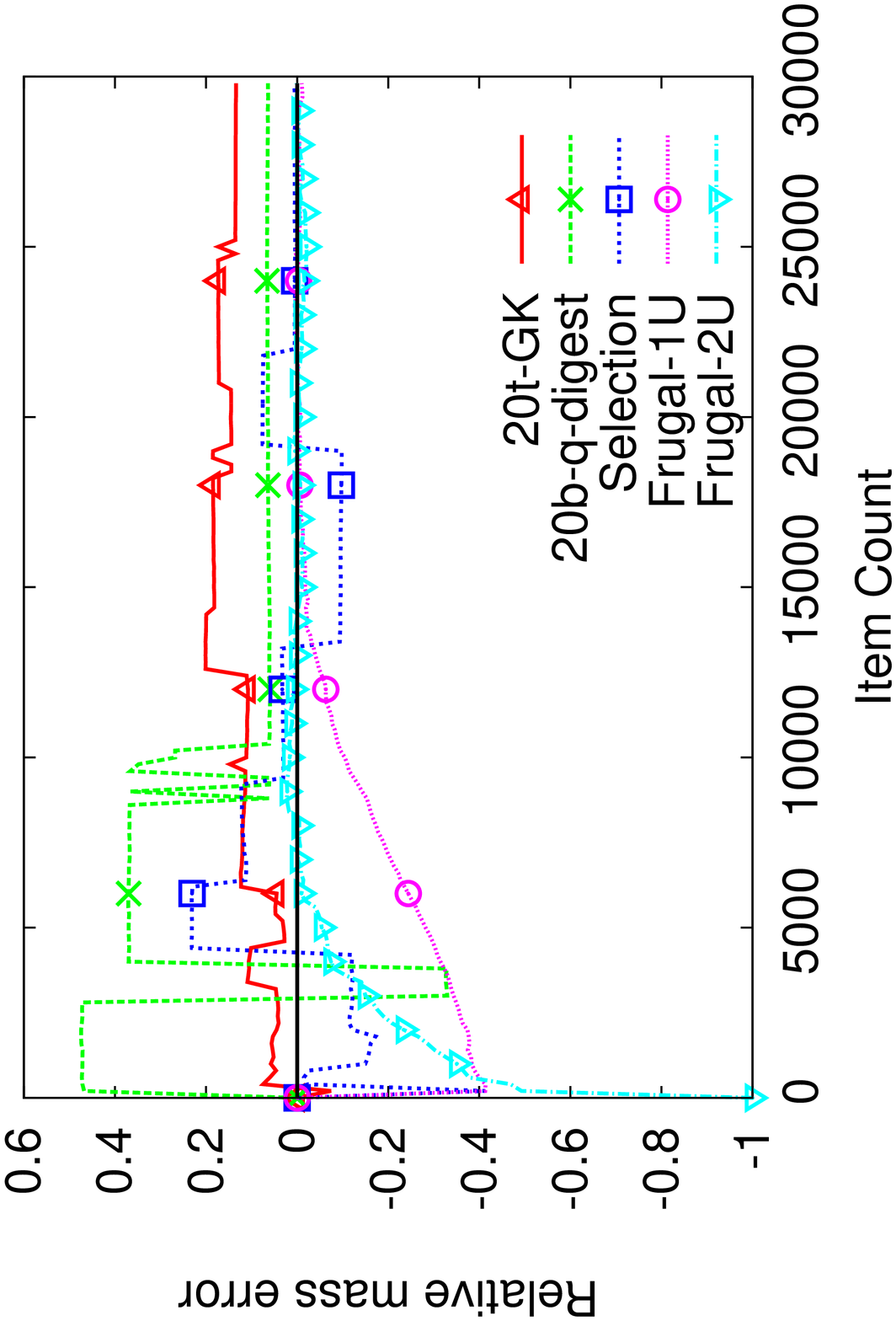, width=\figurewidthJ, angle=-90} \\
\vspace{-0.1in}
\mbox{(a)} & \mbox{ (b)} \\
\epsfig{file=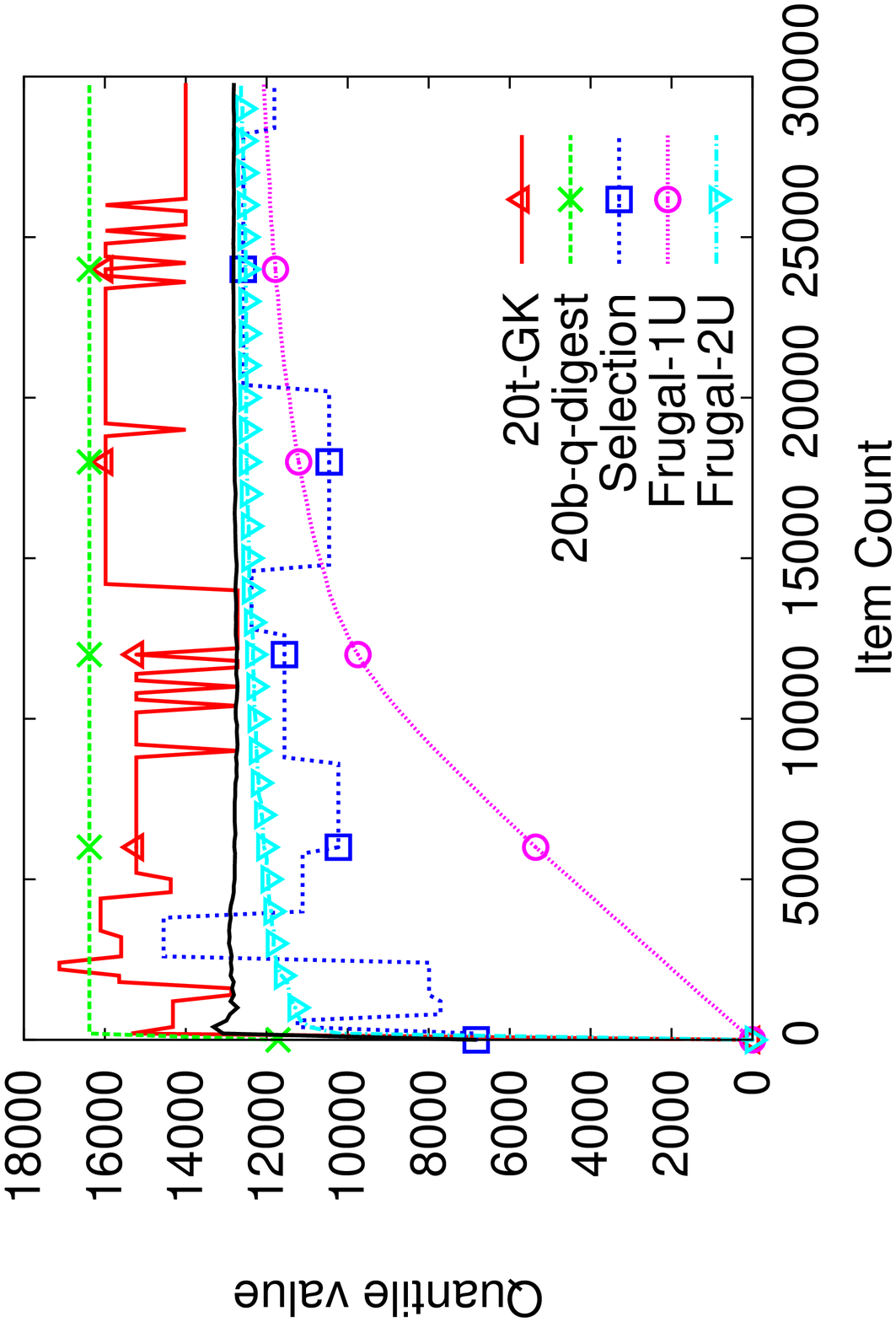, width=\figurewidthJ, angle=-90} &
\epsfig{file=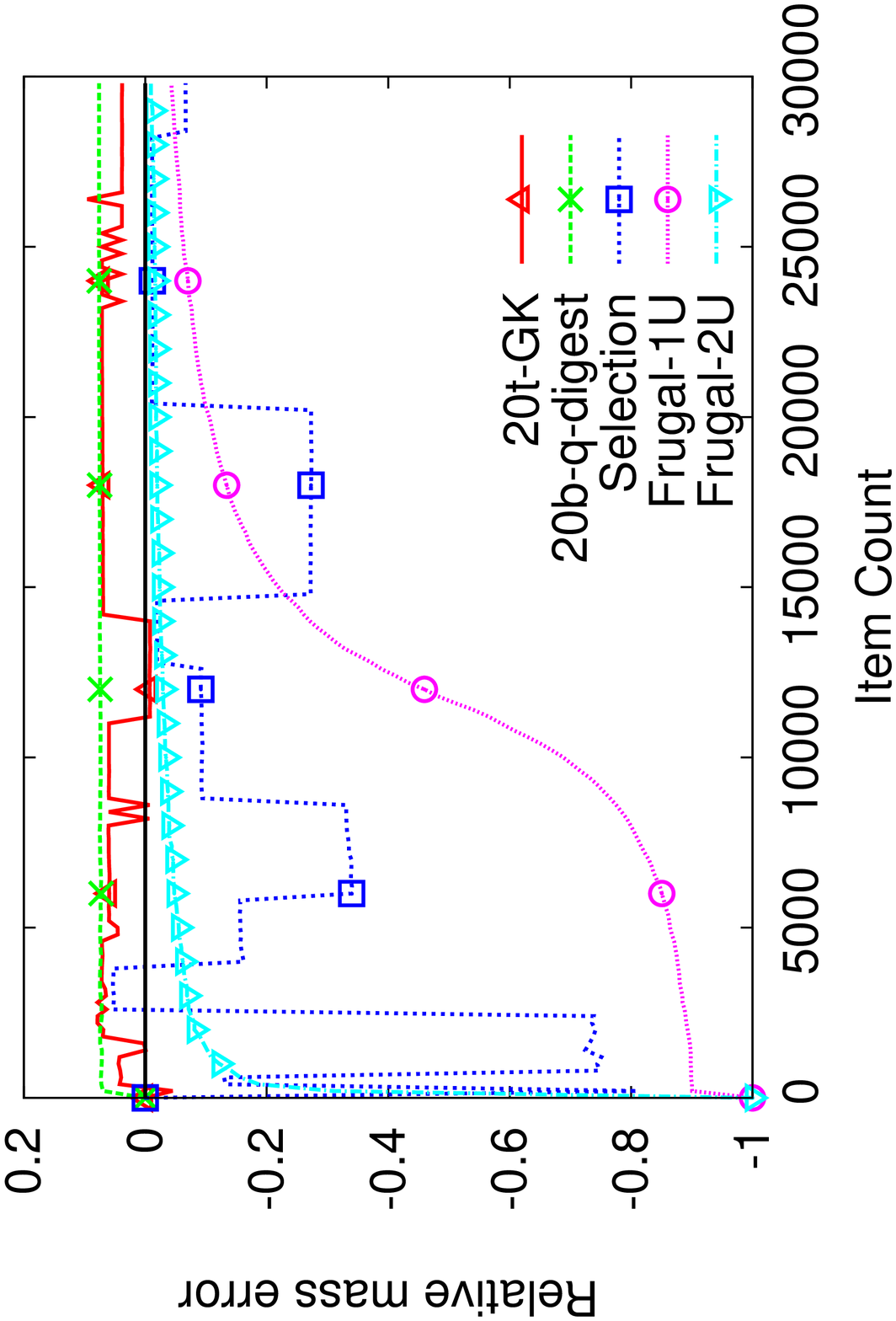, width=\figurewidthJ, angle=-90} \\
\mbox{ (c)} & \mbox{ (d)} \\
\end{tabular}
\vspace{-0.05in}
\caption{Evaluation on stream from one Static Cauchy Distribution. (a) median estimation. (b) relative mass error for (a). (c) 90-\% quantile estimation. (d)  relative mass error for (c).}
\label{plot:static_median}
\vspace{-0.15in}
\end{figure*}

\subsection{\qdigest Algorithm}
Tree based stream summary algorithms were studied by Manku et al. \cite{Manku98approxs}, Munro and Paterson \cite{MunroPaterson-1980}, Alsabti et at. \cite{AlsabtiRanka-1997}, Shrivastava et al. \cite{Shrivastava04mediansandbeyong} and Huang et al. \cite{Huang11qdigestlesscommunication}. In this paper we compare with \qdigest algorithm proposed in \cite{Shrivastava04mediansandbeyong}, which is up to date and most relevant to our comparison aspects. Their proposed algorithm builds a binary tree on a value domain $\sigma$, with depth $log\sigma$. Each node $v$ in this tree is considered as a bucket representing a value range in the domain, associated with a counter indicating the number of items falling in this bucket. A leaf node represents a single value in domain, and associated with the number of items having this value. Each parent node represents the union of the ranges of children nodes, root node represents the full domain range. This algorithm then keeps merging and removing nodes in the tree based on the following two conditions:
\begin{align}
count(v) &\leq \left\lfloor \alpha \right\rfloor \\
count(v) + count(v_p) + count(v_s) &> \left\lfloor \alpha \right\rfloor
\end{align}
Where $v_p$ is the parent and $v_s$ is the sibling of $v$,
and $\alpha$ is chosen based on memory constraints. If a non-leaf node violates the second constraint, its children are merged into $v_p$, and $v$ and $v_s$ are deleted. The original application of this algorithm was to sensor network, however authors also proposed an adaptation to streaming which is the variant we consider here. For every new stream sample we make a trivial \qdigest and merge it with \qdigest built so far. Therefore, at any time we can query for a quantile based on the most recently updated \qdigest. For our evaluation we used number of buckets of $b=20$ 
to build tree digests, presenting the case where insufficient memory are used. Note that empirically the used memory is usually larger than the budget specified, because conditions (1) and (2) do not always guarantee a bucket will be freed up when insertion of a new item is needed. As pointed out in the paper, the actually used memory might be more than specified $b$ while no more than 3$b$.
We refer to this algorithm as \qdigest in our comparisons, and stream domain maximum value is given as required input at the beginning in order to build a binary tree for digest generation.

\subsection{\selection Algorithm}

Guha and McGregor \cite{Guha_streamorder} proposed an algorithm that uses constant memory and operates on random order streams, where the order of elements of the stream have not been chosen by adversary. Their approach is a single pass algorithm that uses constant space and their guarantee is that for a given $r$ (the rank of element of interest) their algorithm returns an element that is within  $O(n^{1/2})$ rank of $r$ with probability at least $1-\delta$ if the stream is randomly ordered. The algorithm does not require prior knowledge of the length of the stream, nor the distribution, which are also not required by our \oneUany and \twoUany.

This single-pass algorithm ($Selection$) processes the stream in phases, and each phase is composed of three sub-phases namely, \textit{sample}, \textit{estimate} and \textit{update}. Throughout the process, algorithm maintains an interval $(a, b)$ which encloses the true quantile. Each phase is trying to narrow this interval. In \textit{sample} phase, a $u$ in $(a, b)$ is selected and get its rank estimated in \textit{estimate} phase, lastly $a$ or $b$ might be replaced by $u$ in \textit{update} phase based on the estimated rank of $u$ above or below true quantile. To work with these three sub-phases, stream is divided into pieces and each piece is used for one phase. Then each piece of the stream is divided into two parts, first part is used for \textit{sample} sub-phase, and the second piece is used for \textit{estimate} sub-phase. Therefore at any time algorithm has to keep four variables which are the boundaries $a$ and $b$, proposed estimation $u$, and a counter to estimate rank of $u$. For this algorithm data size $n$ should be given in order to decide how to divide stream into pieces. By adding one more variable, one can remove this requirement of knowing $n$ beforehand. This extra variable is used to remember the current iteration number, and stream is chopped into sub-streams with exponentially increasing length on iteration number. Each iteration instantiates a $Selection$ algorithm with current sub-stream length. The proved accuracy guarantee can be achieved when the overall stream is very large. In experiments, to overcome this requirement on every large streams we set $\delta=0.99$, and the version without knowing $n$ in advance is evaluated to make comparisons. \footnote{McGregor and Valiant \cite{Mcgregor12shiftingsands} gave a new algorithm using the same space, proving improved approximation with accuracy $n^{1/3+o(1)}$ can be achieved. 
This algorithm is more complicated to implement and also has qualitatively similar behaviour as the algorithm we have implemented here.}

\section{Empirical Evaluations}
\label{sec:evaluations}

\begin{figure*}[t!]
\centering
	$
	\begin{array}{cc}
	\epsfig{file=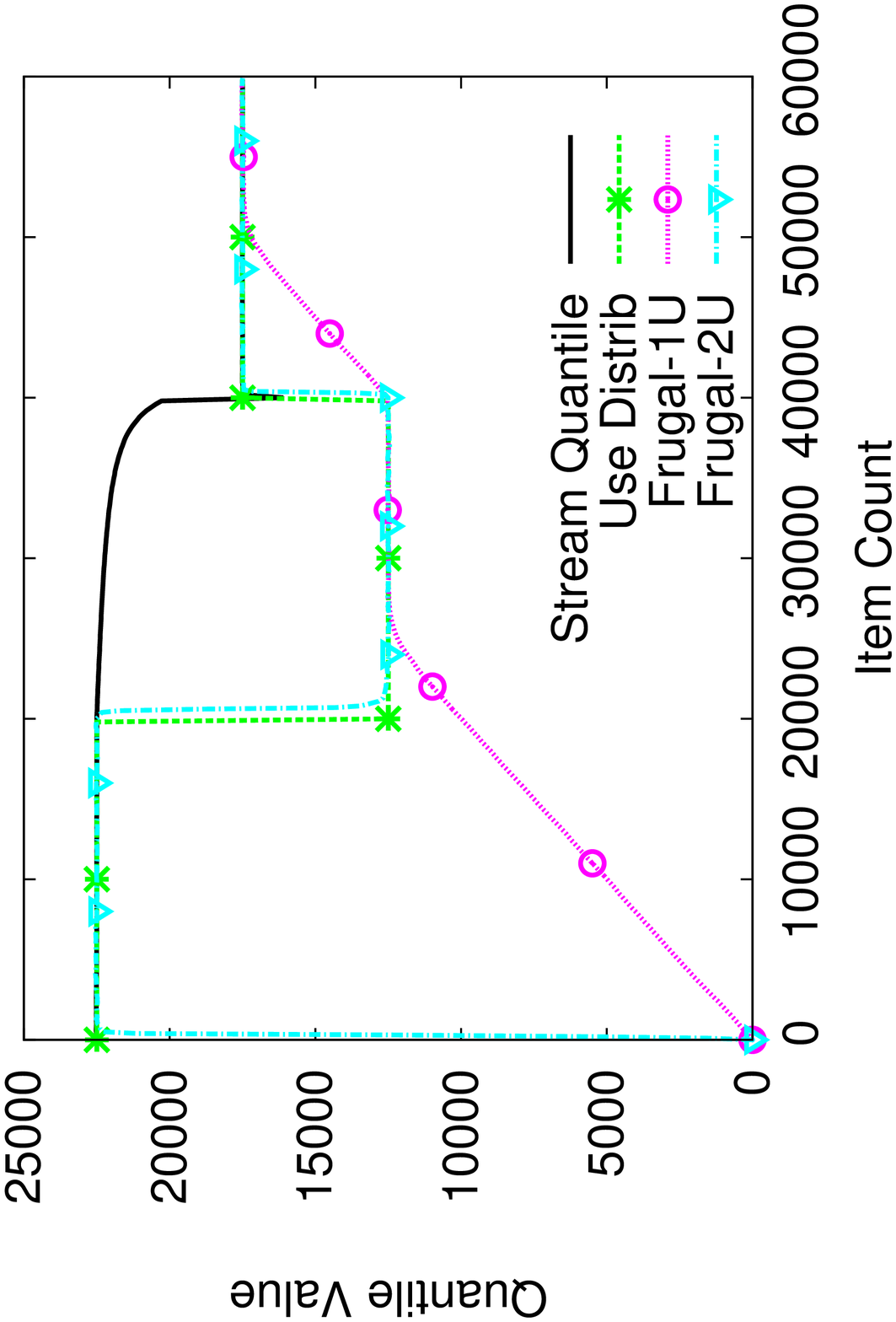, width=\figurewidthJ, angle=-90} &
	\epsfig{file=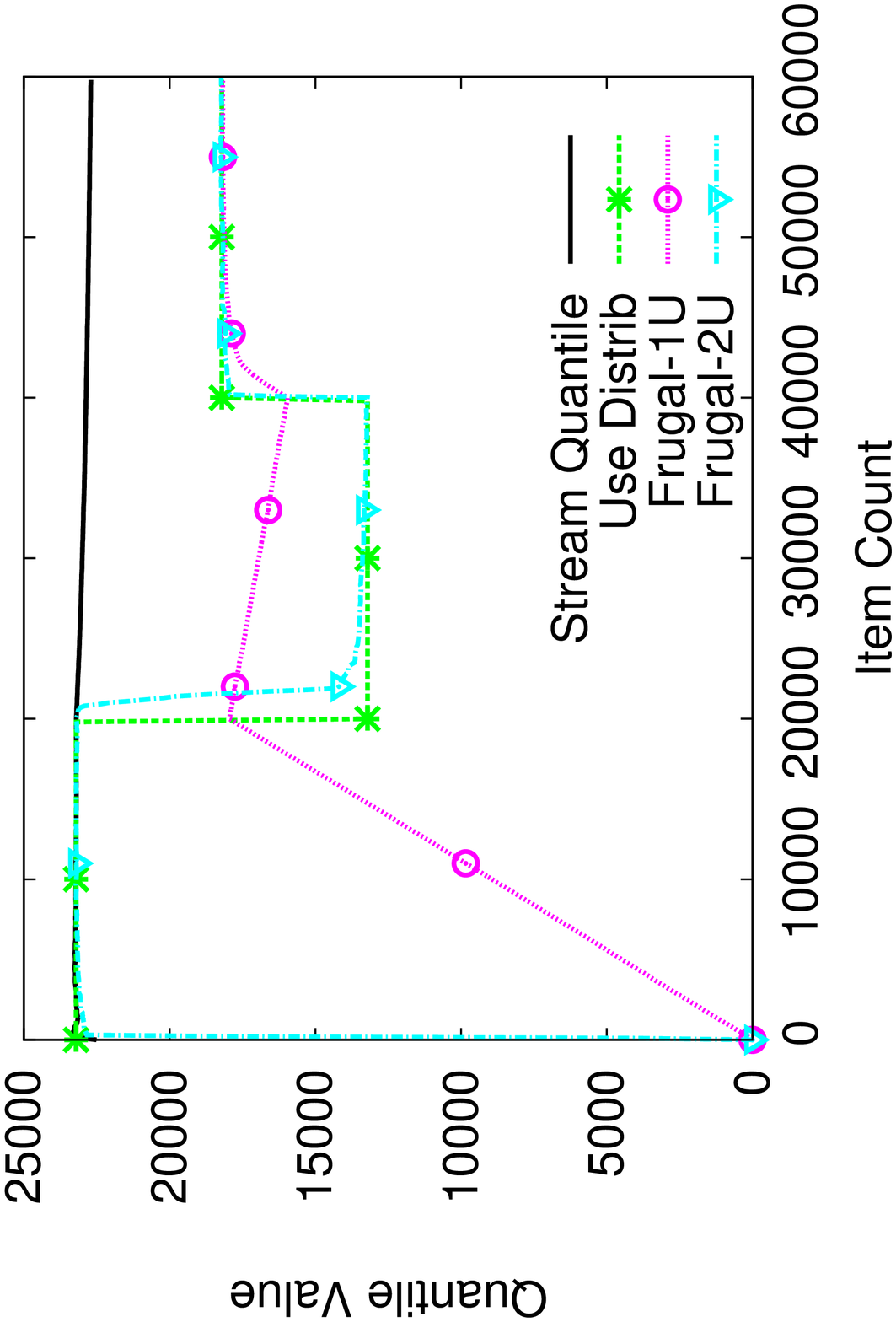, width=\figurewidthJ, angle=-90} \\
	\mbox{\bf (a)} & \mbox{\bf (b)}
	\end{array}
	$
	\caption{Evaluation on one stream generated from three Cauchy distributions. (a) Median estimation. (b) 90-\% quantile estimation. The change of $Use$-$Distrib$ curve indicates the change of underlying distribution. \twoUany algorithm converges to new distribution quantiles significantly faster than \oneUany.}
\label{plot:changing_cauchy}
\end{figure*}

In this section we evaluate our algorithms on both synthetic and real world data. For synthetic data we consider two scenarios, one when data arrive from a static distribution, and another when the distribution changes mid-stream. These tests allow us to demonstrate that our algorithms perform well on estimating stream quantiles for both scenarios. For real world data we evaluate on data from HTTP streams \cite{Bissias05httpstreams} and twitter user tweets data, where our goals are to evaluate median and \ninty quantile estimates of TCP-flow duration and size, and twitter users' tweet intervals. As mentioned earlier the structure of our algorithms allow us to estimate quantiles for every remote website or user with minimum (1-2 in-memory variables per data stream) memory requirement, and which quantile to estimate can be shared by all streams.

Instead of evaluating the absolute error of quantile estimation, we evaluate how far the estimate is from the true quantiles, the relative mass error. For example if the estimate
of 90-\% quantile turned out to be 89-\% quantile the error is then 0.01.

From Section~\ref{subsec:approach}, we know the initial estimations of our algorithms only affect the number of steps needed to approach true quantile, but not their stability in long run. Throughout our experiments, we initialize \oneUany and \twoUany algorithms estimates with $0$ (in practice we can also initialize them with the first stream item). For non-constant memory algorithms \gk and \qdigest, when we limit the memory budget to a small amount (\eg~20 units of their in memory data structure) they don't achieve accurate quantile estimations and perform worse than our \oneUany and \twoUany, but when given sufficient size of memory (\eg~500 units) they can perform well.

\subsection{\textbf{Synthetic Data}}
In this section we evaluate algorithms on data streams from a Cauchy distribution (density function $f(x) = \frac{\gamma}{\pi(\gamma^2 + (x-x_0)^{2}}$). 
The reason we picked Cauchy is because it has a high probability of outliers, indeed the expected value of a Cauchy random variable is infinity, and thus we can demonstrate that our algorithms work well in the presence of outliers.

\para{Static distribution.} For our experiments we fixed $x_0=10000$ and $\gamma=1250$. We draw $3 \times 10^4$ samples and explore estimation convergence. We let \oneUany algorithm start from 0, and quite as expected it took a long journey to approach the true quantile\footnote{Note, this is an inherent property of our algorithm, because the step is fixed at 1, if the range and/or acceptable error are known in advance the convergence can be improved. Our 2-variable algorithm does not need such knowledge}. \twoUany algorithm also starts its estimation from 0, but with dynamic $step$ size throughout the updates.

The curve $Stream~quantile$ in Figure \ref{plot:static_median} (and in other figures throughout our evaluation) shows the cumulative quantiles of a stream. Not only for \oneUany and \twoUany, but we see that each algorithm needs some time (some amount of stream items) before getting to a stabler quantile estimation. When memory is insufficient for the non-constant memory algorithms, estimation performance degrades much. Due to smaller fixed update size of \oneUany, it takes much longer travel than \twoUany to reach stream quantiles. 

\begin{figure*}[ht!]
  \centering
  \subfigure[{}]
  {\psfig{figure=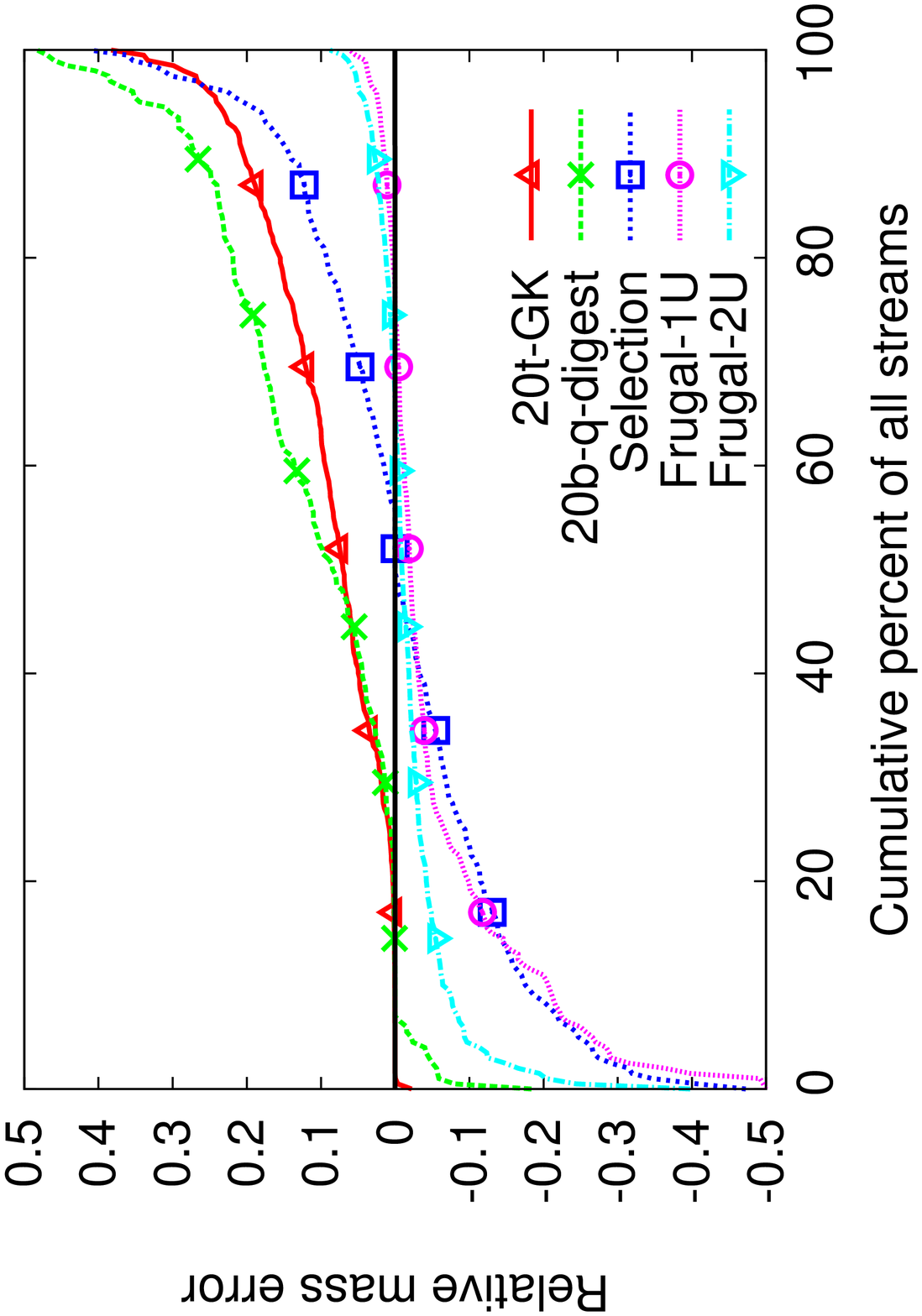, width=\figurewidthJ, angle=-90}}
  \subfigure[{}]
  {\psfig{figure=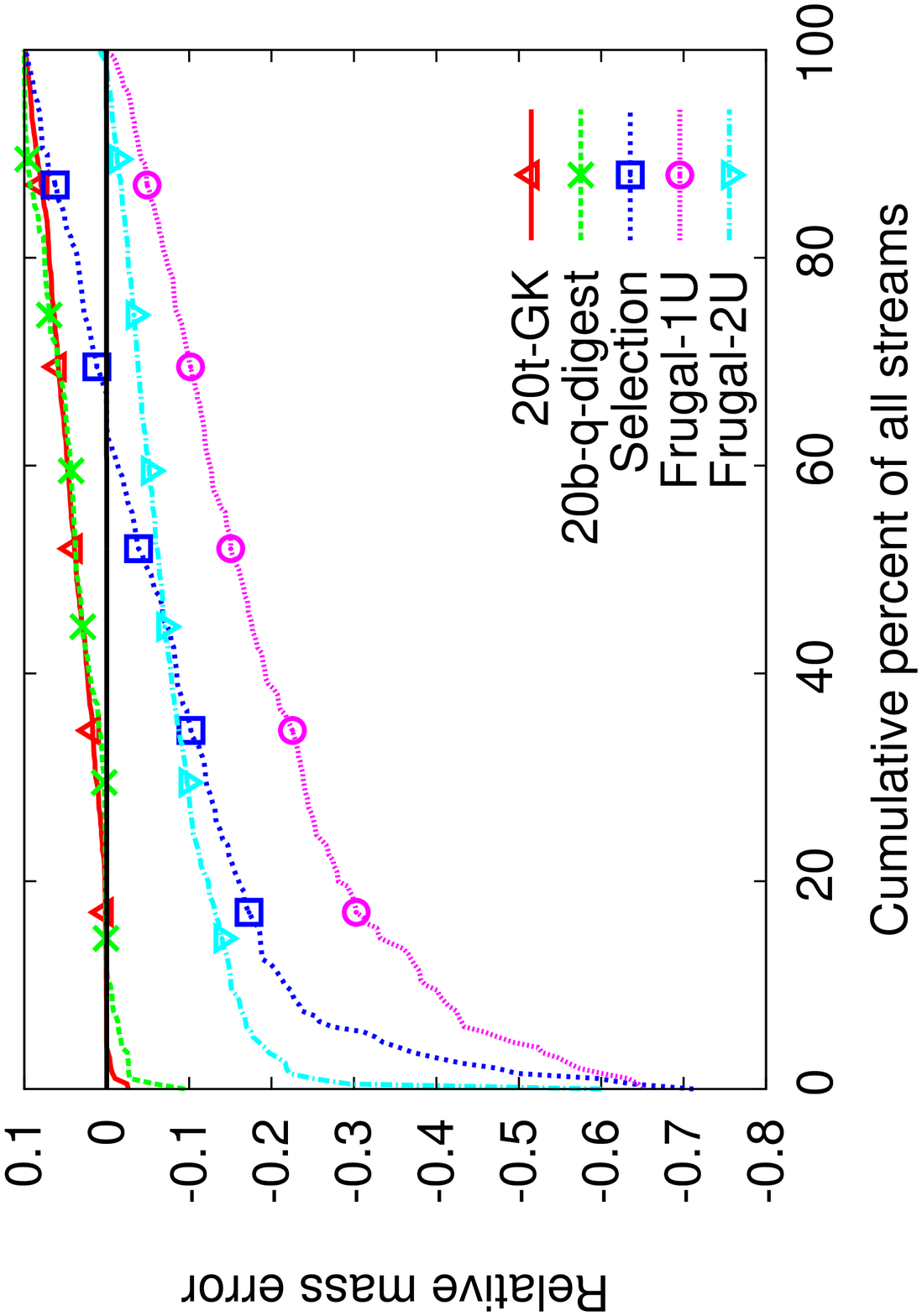, width=\figurewidthJ, angle=-90}}
  \caption{Evaluation on 419 TCP-flow size streams. (a) median estimation. (b) 90-\% quantile estimation}
  \label{plot:allsitesflowsize}
\end{figure*}

\begin{figure*}[ht!]
  \centering
  \subfigure[{}]
  {\psfig{figure=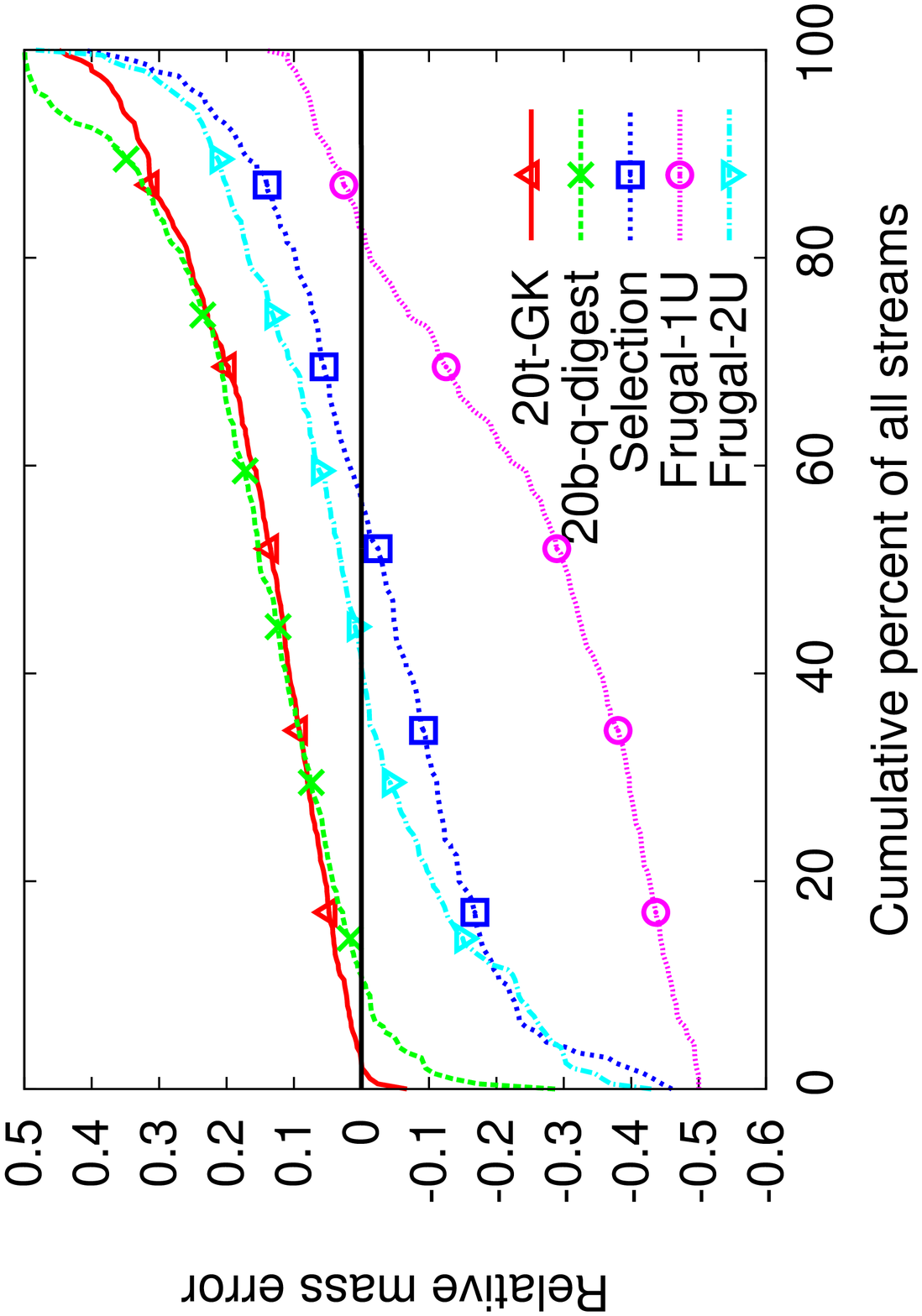, width=\figurewidthJ, angle=-90}}
  \subfigure[{}]
  {\psfig{figure=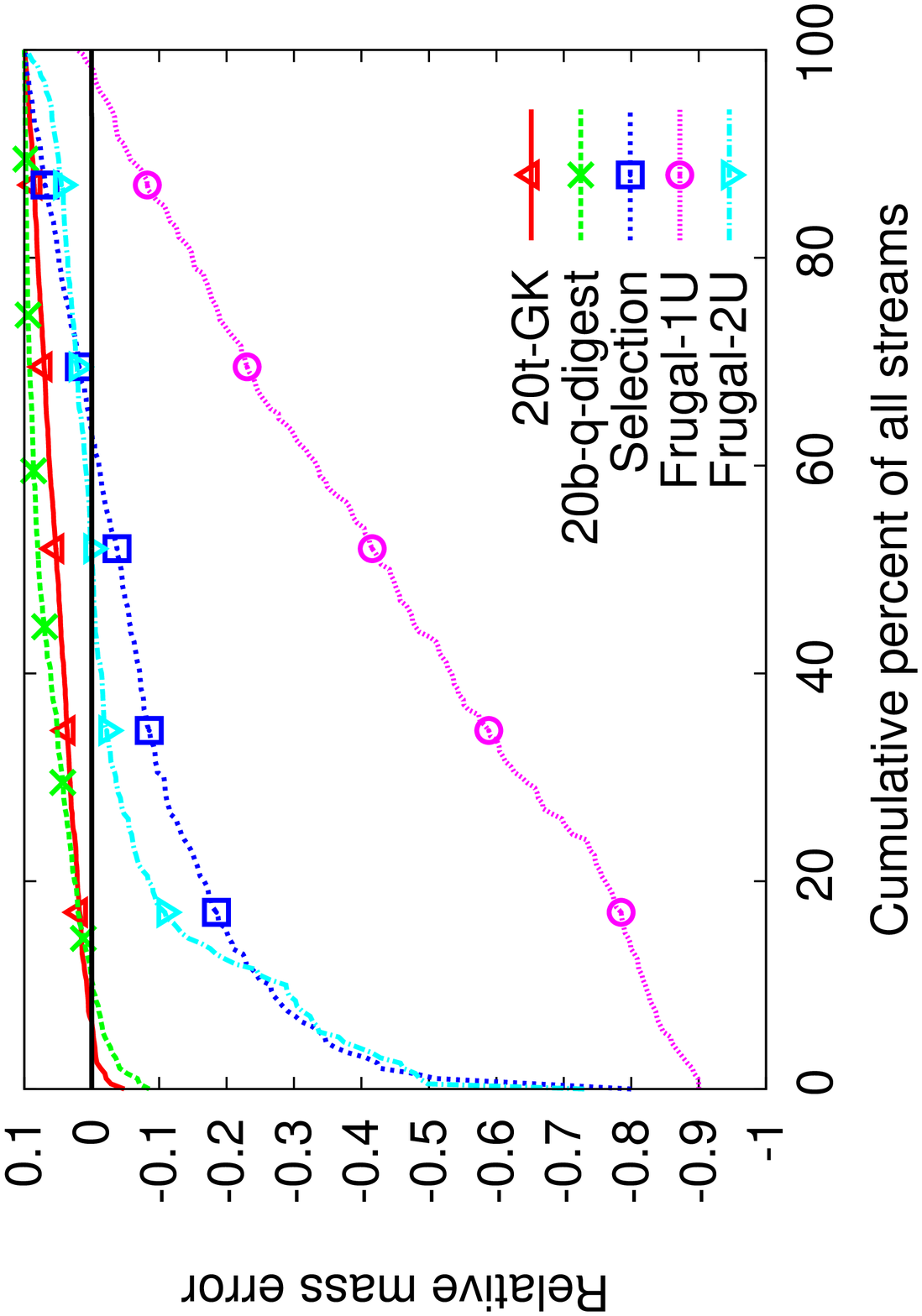, width=\figurewidthJ, angle=-90}}
  \caption{Evaluation on 419 TCP-flow duration streams. (a) median estimation. (b) 90-\% quantile estimation}
  \label{plot:allsitesduration}
\end{figure*}

\para{\textbf{Dynamic distribution.}}
Since other algorithms in comparison are not built for estimating changing distributions, we only evaluate \oneUany and \twoUany in the scenario where the underlying distribution of stream changes. We generate three sub-streams drawn from three different Cauchy distributions and feed them one by one to our algorithms to estimate stream quantiles. For each of the three sub-streams we sample $2\times10^4$ items in value domains [10000, 15000], [15000, 20000] and [20000, 25000] respectively.

Figure \ref{plot:changing_cauchy} shows the median and \ninty quantile estimations only for \oneUany and \twoUany algorithms. Those sub-streams are ordered by their medians in the order of highest, lowest and middle, then they are feed to algorithms one by one. 
For other algorithms they either need to know the value domain as input or they try to learn upper and lower bounds for the quantile in query, therefore if the stream underlying distribution changes their knowledge about stream are out-dated hence quantile approximations are probably not accurate. $Stream$-$quantile$ curve shows the cumulative stream quantiles, and this is the curve which those algorithms try to approximate if the combined stream is of interest at the beginning. But in this figure we want to show that our \oneUany and \twoUany are doing a different job. $Use$-$Distrib$ curve shows the quantile values for each sub-distribution. The change of $Use$-$Distrib$ curve indicates the change of underlying distribution. We can see that our algorithms are trying to reach new distribution's quantile when the stream underlying distribution changes. It is only that \oneUany takes longer time to approach new distribution's quantiles, while \twoUany can make ``sharper'' turns in its quantile estimations when distribution changes. \oneUany in Figure \ref{plot:changing_cauchy}.(b) leaves a steeper approaching trace to \ninty quantile than estimating median in Figure \ref{plot:changing_cauchy}.(a), because it is more biased to move estimate towards one direction (getting larger). 

One counter argument is that the property of adapting to changing distribution's new quantile might be a simultaneous disadvantage, because it makes the algorithms vulnerable to short bursts of "noise". However since the adjustment taken by \oneUany is 1, when stream domain is large the shifting from true stream quantile caused by short bursts will not affect much in terms of relative mass error. For \twoUany it is true that $step$'s increment and decrement function $f$ should be picked to trade-off between convergence speed and stability when bursts or periodic patterns are apparent in streams. But once after reaching a close estimate of true quantile, the decreasing $step$ value is able to buffer the impact of some value bursts.

\begin{figure*}[ht!]
  \centering
  \subfigure[{}]
  {\psfig{figure=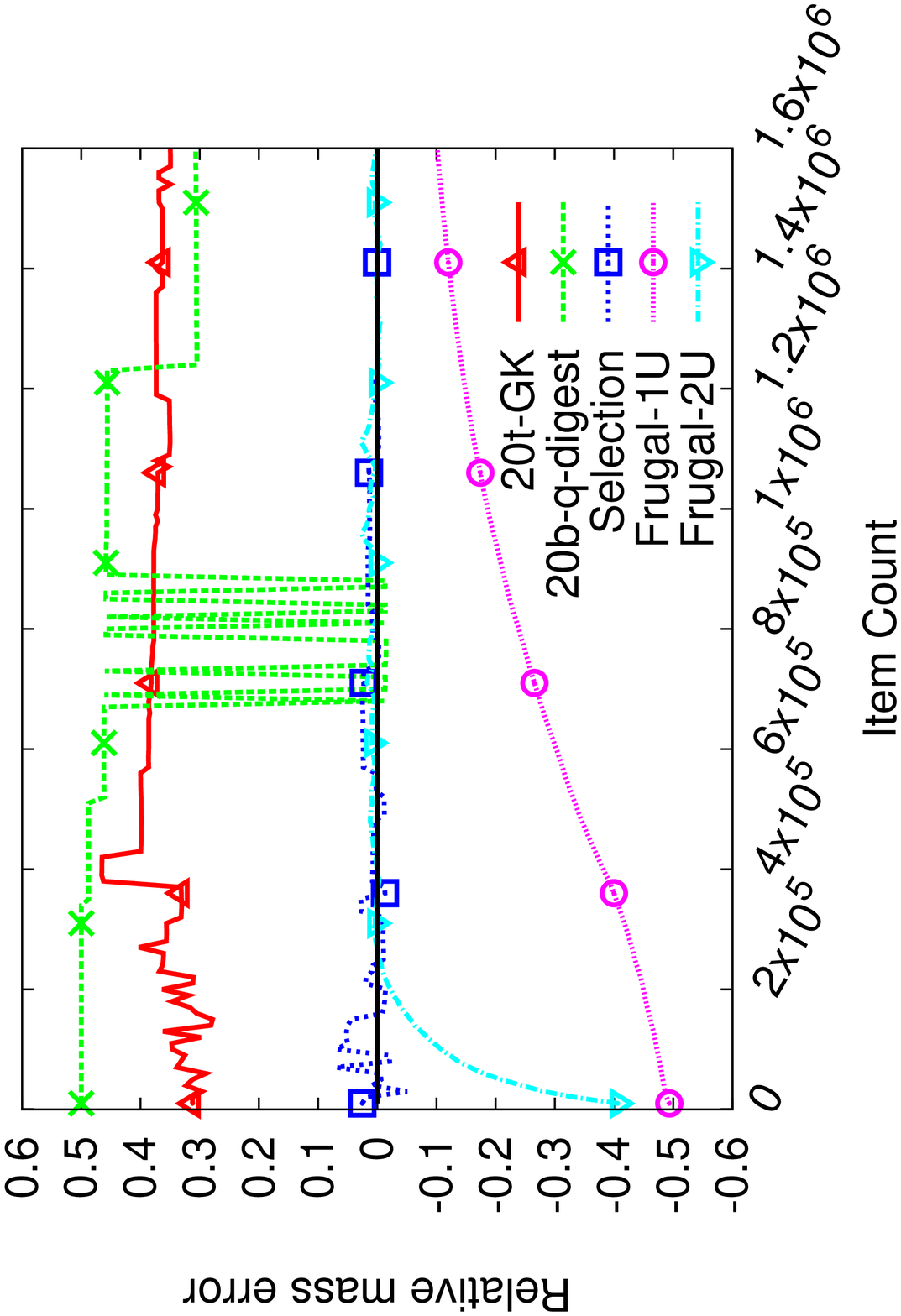, width=\figurewidthJ, angle=-90}}
  \subfigure[{}]
  {\psfig{figure=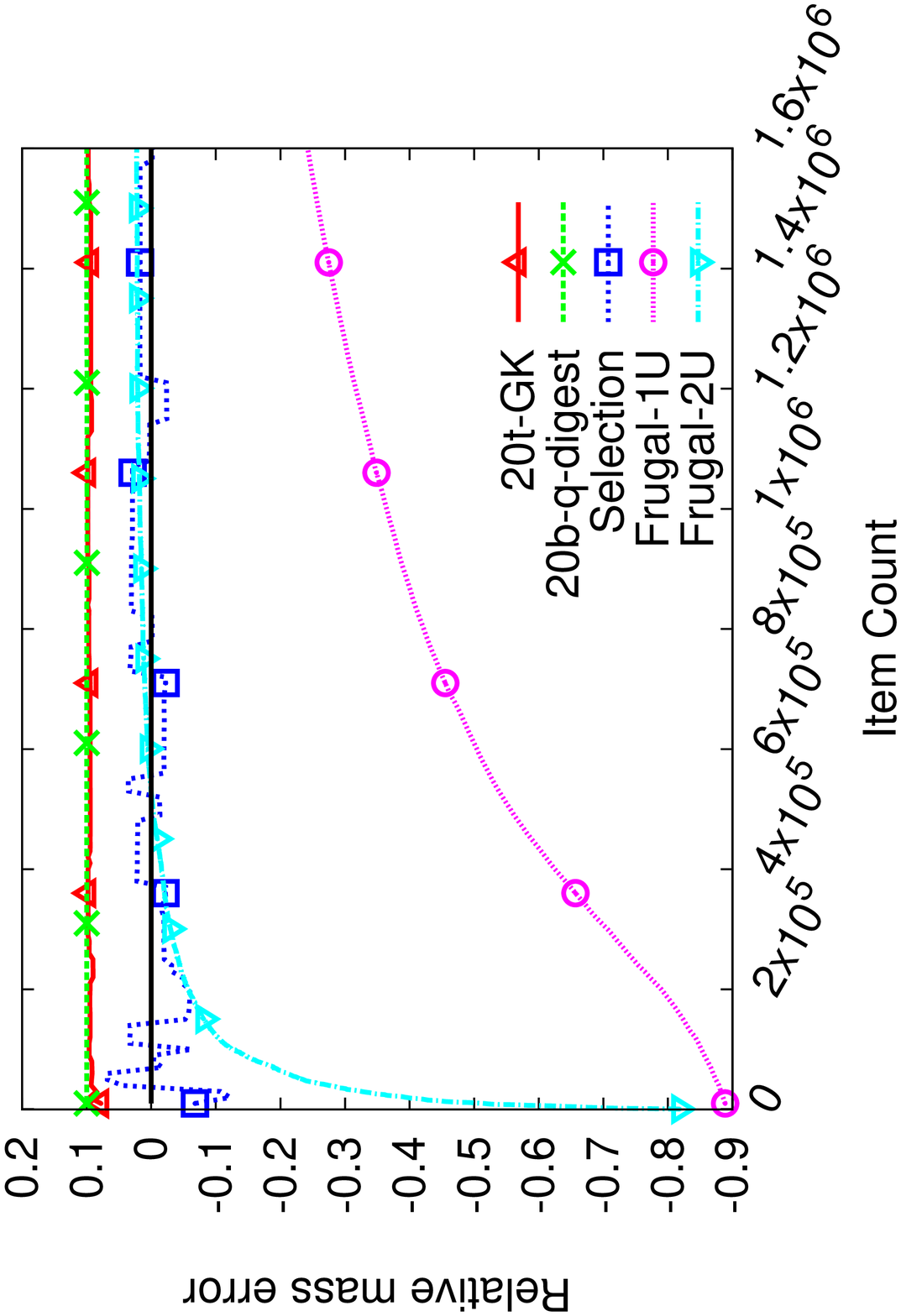, width=\figurewidthJ, angle=-90}}
  \caption{Evaluation on TCP-flow duration stream of month 2004-03. (a) median estimation. (b) 90-\% quantile estimation}
  \label{plot:combinedduration}
  \vspace{-0.15in}
\end{figure*}

\begin{figure*}[ht!]
  \centering
  \subfigure[{}]
  {\psfig{figure=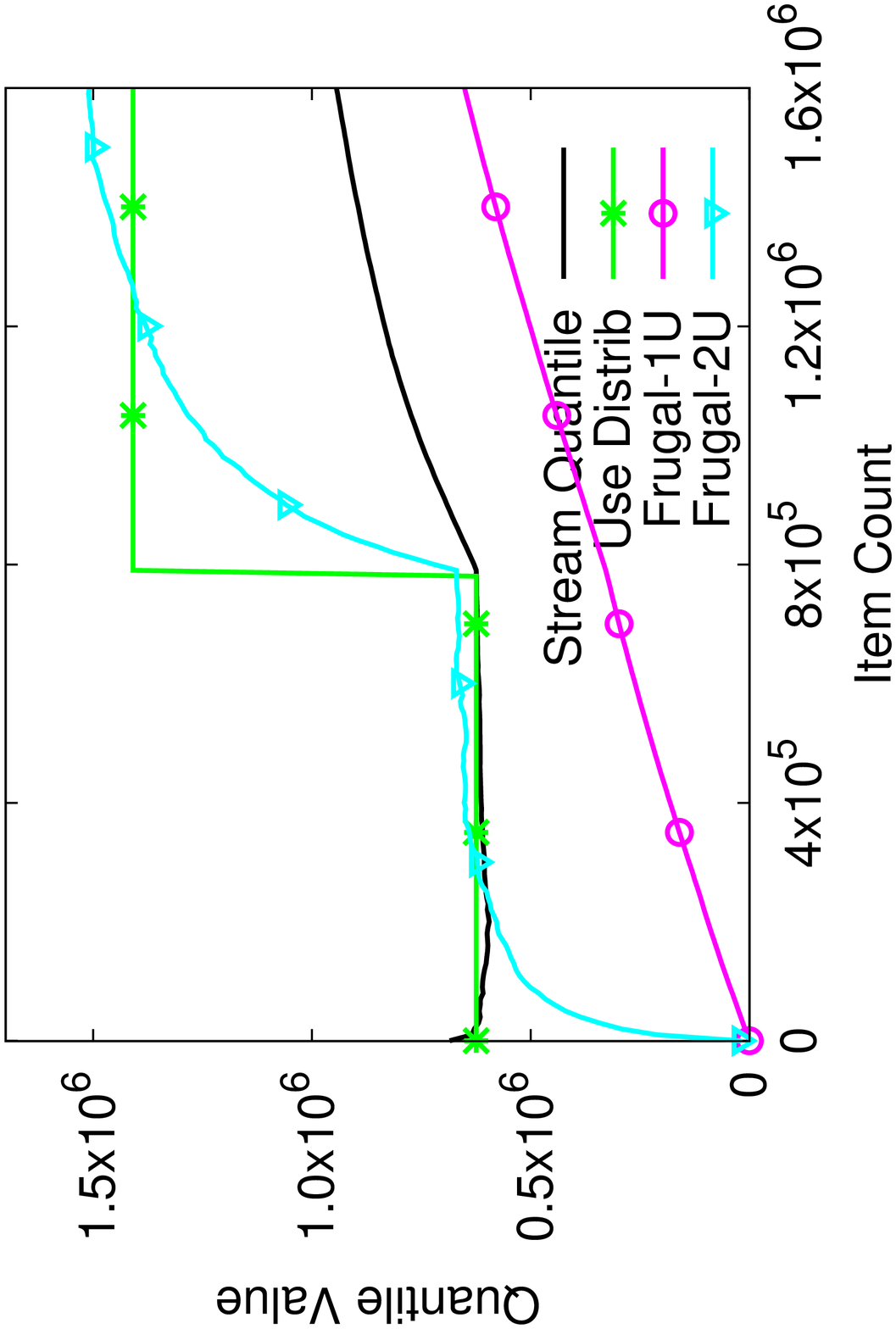, width=\figurewidthJ, angle=-90}}
  \subfigure[{}]
  {\psfig{figure=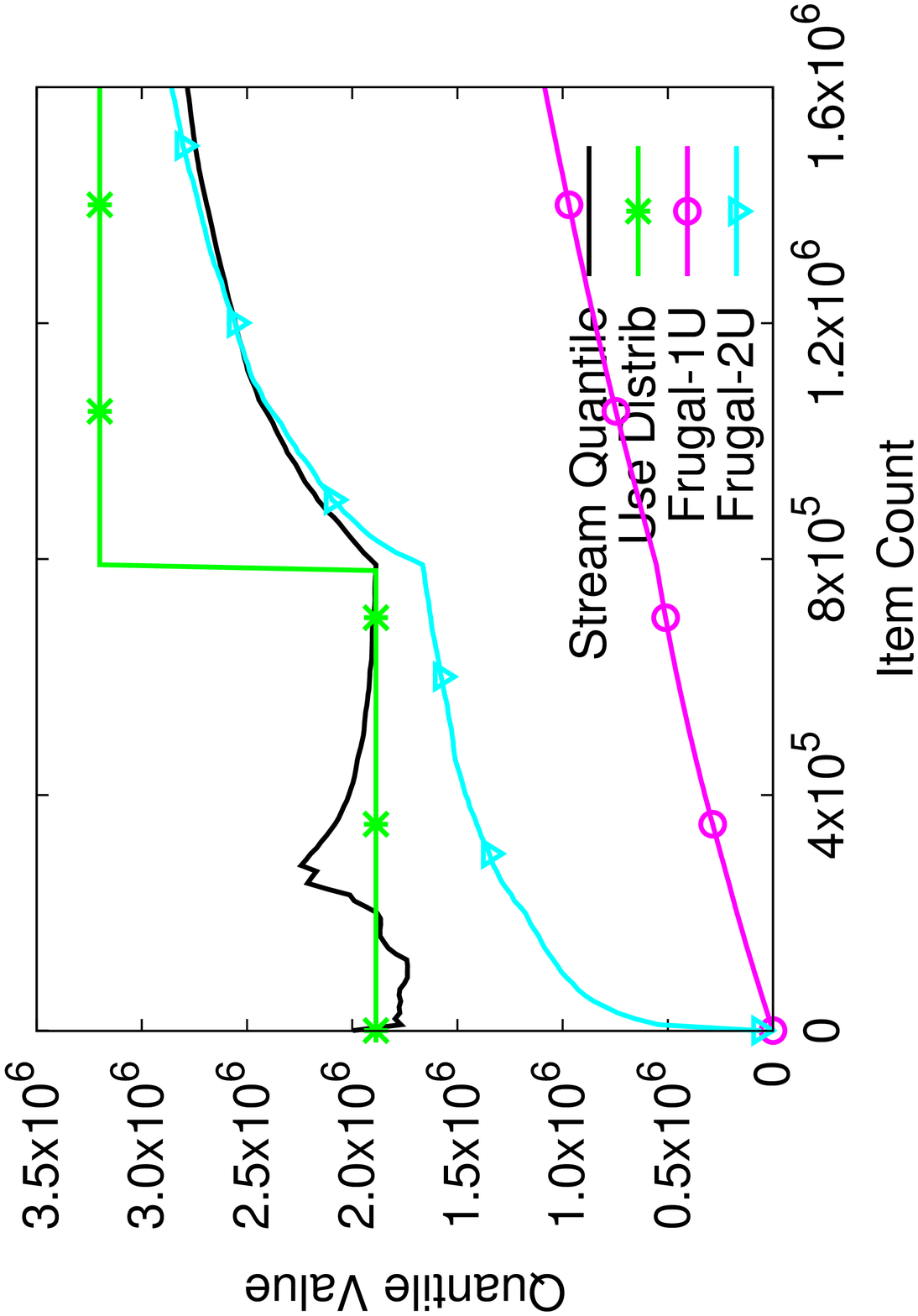, width=\figurewidthJ, angle=-90}}
  \vspace{-0.05in}
  \caption{Evaluation on TCP-flow duration stream of month 2003-12, with dynamic distribution. (a) median estimation. (b) 90-\% quantile estimation}
  \label{plot:combinedchanging}
  \vspace{-0.15in}
\end{figure*}

\subsection{\textbf{TCP-flow Data}}
From an HTTP request and response trace \cite{Bissias05httpstreams} collected for over a period of 6 months, spanning 2003-10 to 2004-03, we extract out TCP-flow durations (in millisecond\footnote{If use microsecond, the quantile values are too large for evaluation, where 90\% of the stream medians are above 260,057, but more than 80\% of the stream sizes are less than 20,000. Then \oneUany and \twoUany do not have much chance to get close to stream quantiles.}) and sizes (in bytes) between local clients and 100 remote sites, and order them by connections set up times to form streams. In this experiment we first evaluate on streams generated with each of those 100 sites in each of the 6 months. Therefore in total we have 600 streams. But in final performance evaluations we filter out streams with length less than 2000 items and end up with 419 used streams. Finally we collect the last estimations for median and 90-\% quantile by all algorithms. 

Figure~\ref{plot:allsitesflowsize} shows the relative mass error and cumulative percent of all 419 streams on estimating median and 90-\% quantile of flow size streams. We can see that in estimating median and 90-\% quantile for TCP-flow size streams, Figure \ref{plot:allsitesflowsize}.(a), \oneUany and \twoUany perform better than or comparable with other algorithms, with more than 90 percent of the last median estimations in error range [-0.1, 0.1]. In comparison, $t=20$  for \gk and $b=20$ for \qdigest are not enough to arrive at close estimations, and $Selection$ algorithm needs much longer streams. Note that in relative mass error figures, the overestimate errors are bounded by 0.5 and 0.1 respectively for median and \ninty quantile estimations. In \ref{plot:allsitesflowsize}.(b) \oneUany under-estimates \ninty quantile for a large portion of the streams due to insufficient stream sizes and relatively larger \ninty quantile values (90\% of the stream \ninty quantiles are larger than 4,354 while more than half of the stream sizes are less than 8,500). Although \twoUany makes under-estimates most of the time for \ninty quantile, in terms of estimation error range its performance does not degrade much.

 Figure~\ref{plot:allsitesduration} shows the performance comparison on 419 TCP-flow duration streams. In estimating medians of TCP-flow duration streams, Figure \ref{plot:allsitesduration}.(a), \oneUany and \twoUany perform worse than working on flow size streams. After examining the data, we found that in duration streams periodic patterns are apparent, where a series of large duration values are followed by a series of much smaller duration values. These patterns add noise to \oneUany and \twoUany, but still \twoUany performs better than \gk and \qdigest which use more than 10 times in memory variables.

In the situations where there are millions of streams to be processed simultaneously, statistical quantities about more general groups can help understand the characteristics of different groups. In HTTP request and response trace, streams generated by remote site can also be considered as \textbf{GROUPBY} application to understand the communication patterns from local clients to different remote sites. 

\begin{figure*}[ht!]
  \centering
  \subfigure[{}]
  {\psfig{file=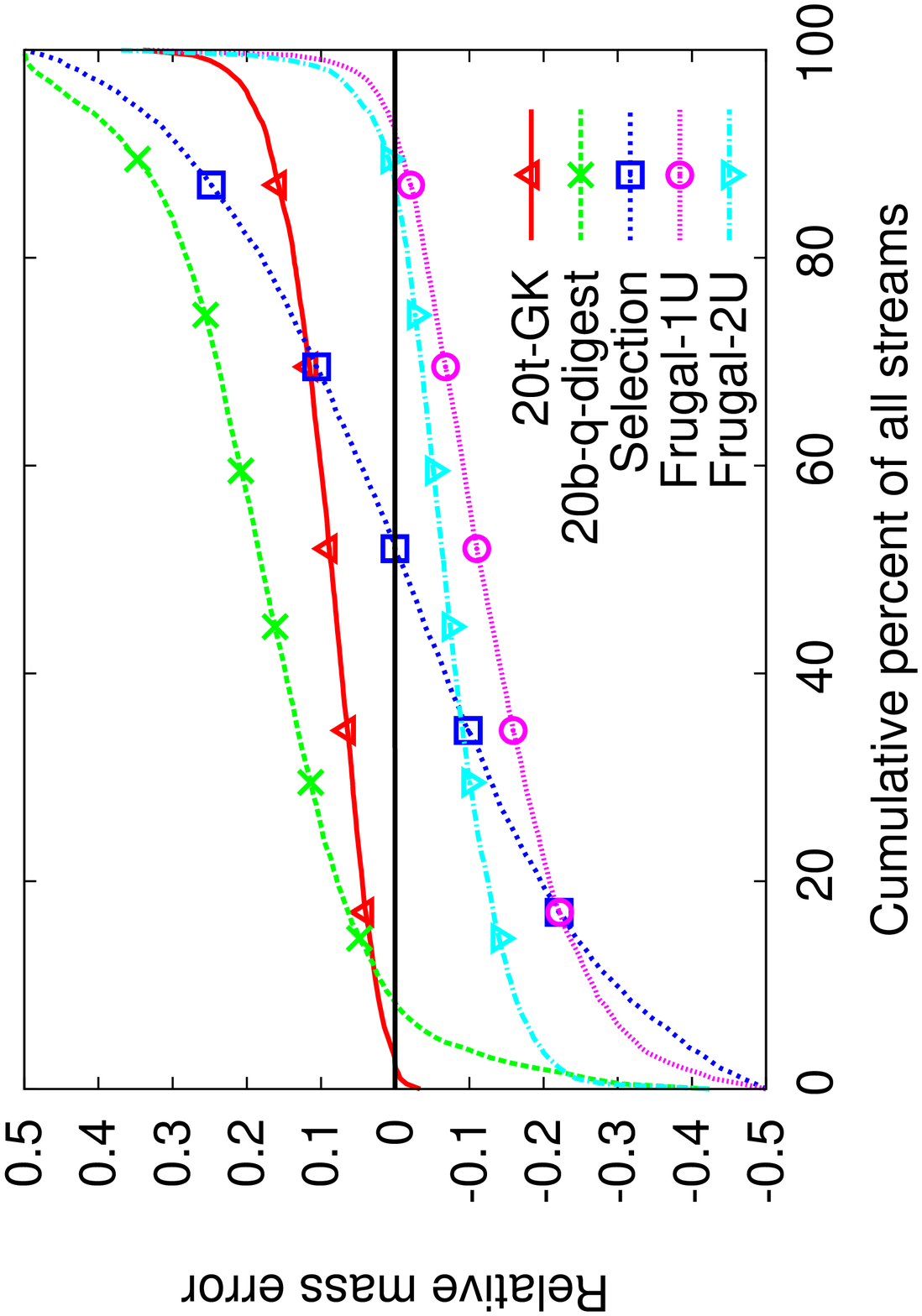, width=\figurewidthJ, angle=-90}}
  \subfigure[{}]
  {\psfig{file=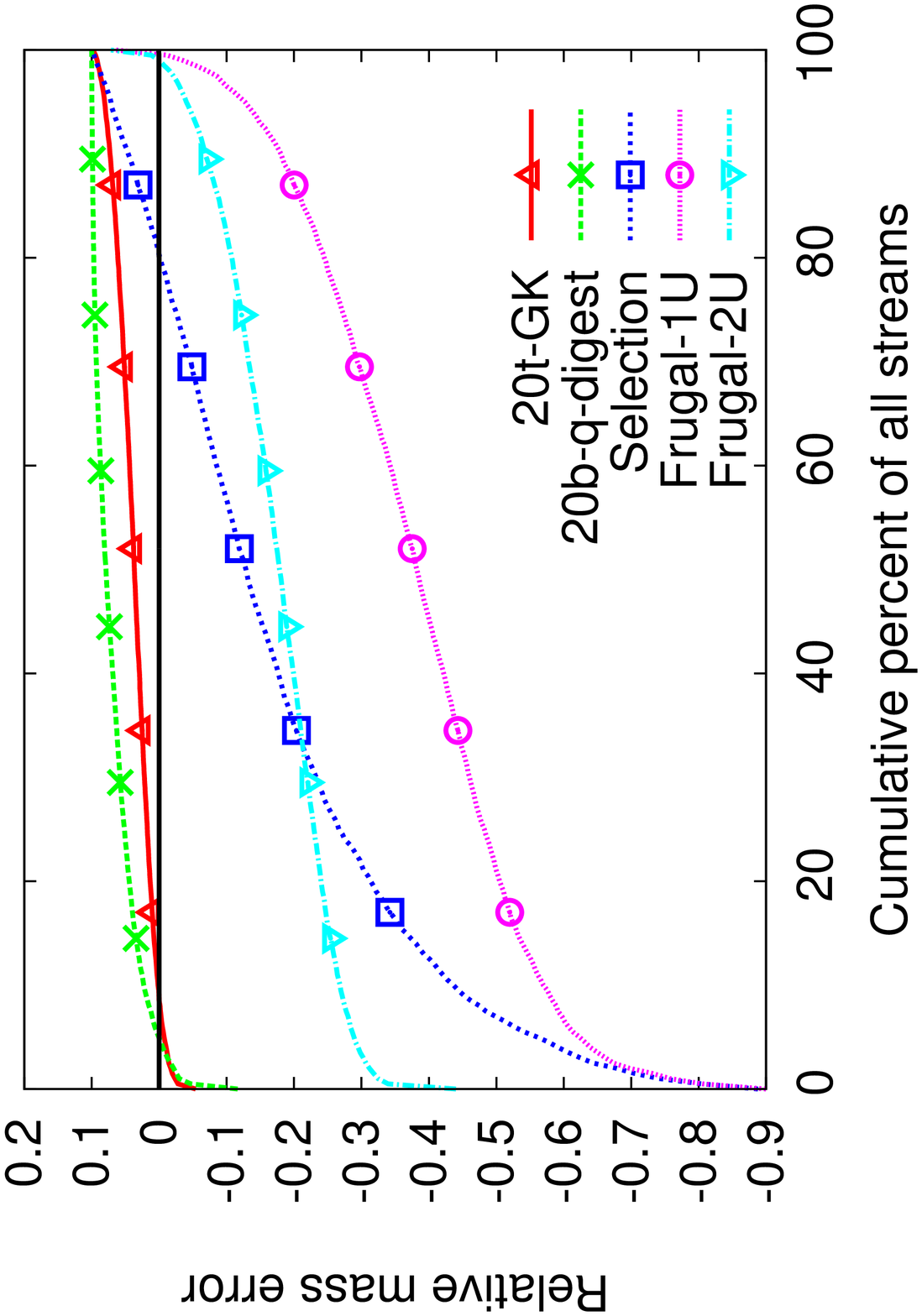, width=\figurewidthJ, angle=-90}}
\caption{Evaluation on 4414 twitterers' tweet interval streams. (a) median estimation. (b) 90-\% quantile estimation.}
\label{plot:twitter-all}
\vspace{-0.1in}
\end{figure*}

\begin{figure*}[ht!]
  \centering
  \subfigure[{}]
  {\psfig{file=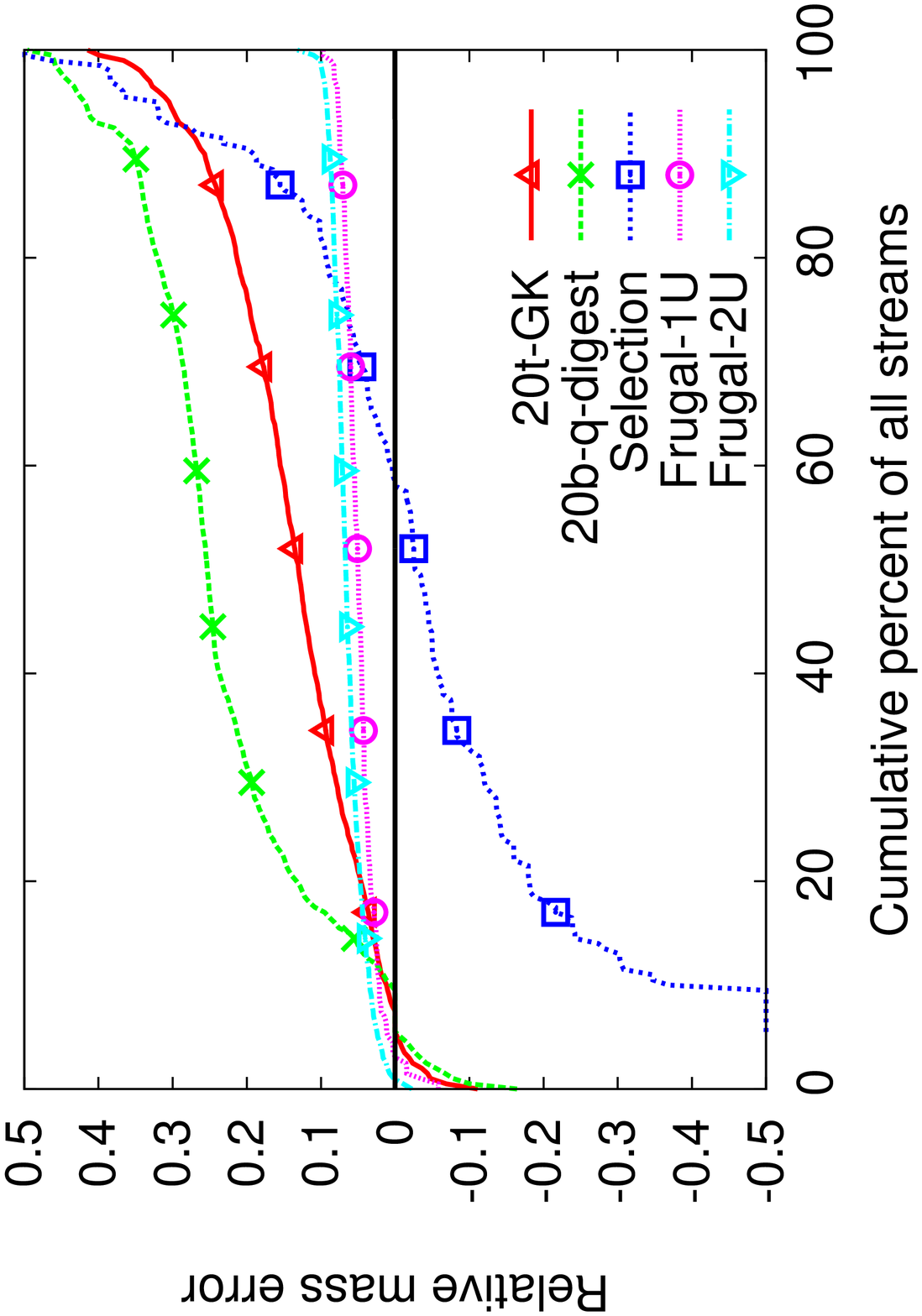, width=\figurewidthJ, angle=-90}}
  \subfigure[{}]
  {\psfig{file=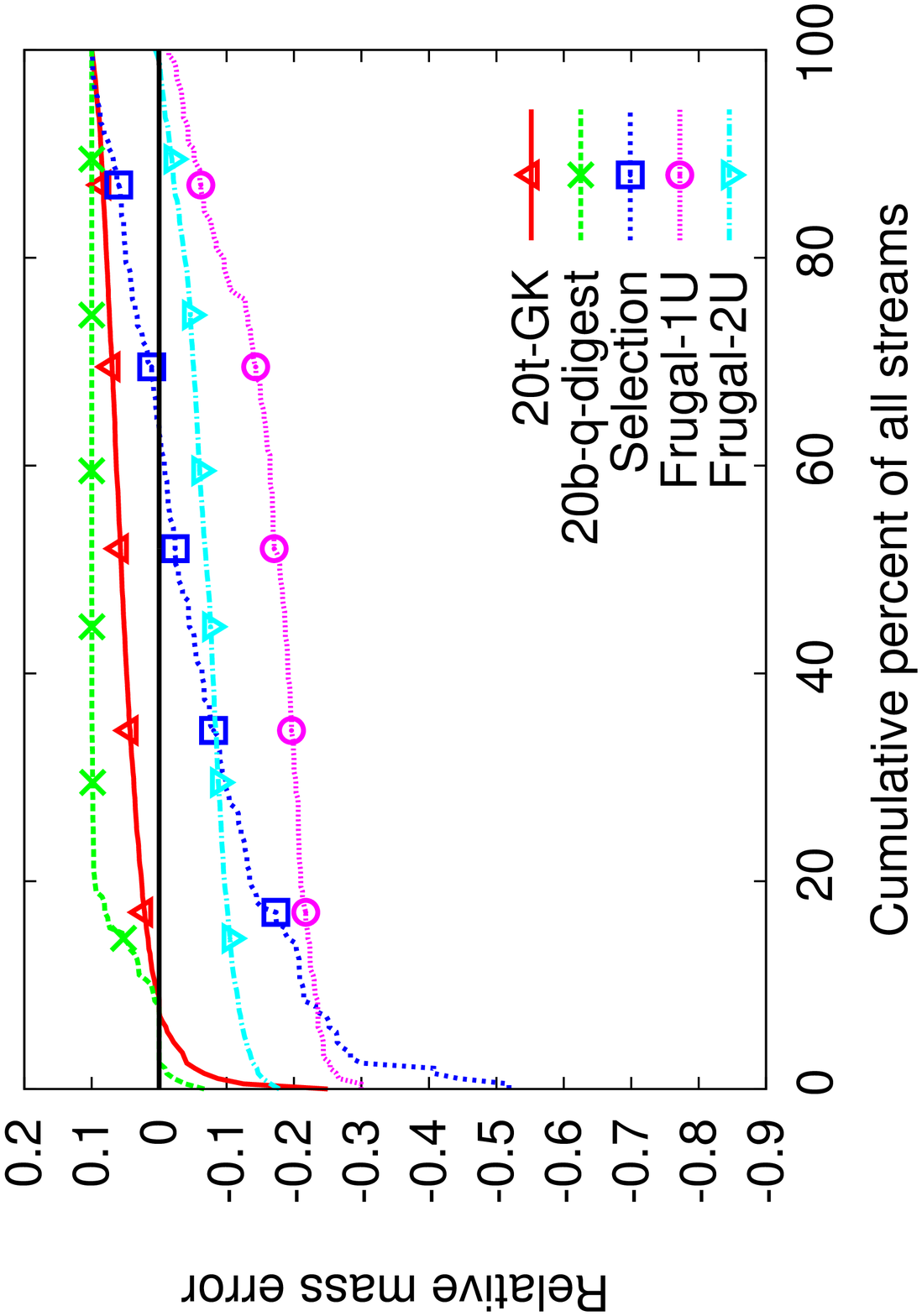, width=\figurewidthJ, angle=-90}}
\caption{Evaluation on 905 daily tweet interval streams. (a) Median estimation. (b) 90-\% quantile estimation}
\label{plot:alldays}
\vspace{-0.1in}
\end{figure*}

Note that stream size should be large for \oneUany and $Selection$ algorithms to settle at estimations close to true quantiles. We evaluated all algorithms on another \textbf{GROUPBY} application on this HTTP trace data, where connections with all 100 sites in each month are combined by their creation time. This simulates the viewpoint from trace collecting host. Algorithms are evaluated on each month's combined streams. For brevity here we present the results from evaluation on combined streams of month 2004-03, which contains one of the largest by month stream, and the results are similar for other months (except the distribution changing stream we will see later). This combined stream has about $1.6\times 10^{6}$ items. Figure~\ref{plot:combinedduration} 
presents the results on estimating median and \ninty quantile of TCP-flow duration
stream. This duration stream's items are in unit of microsecond, because we have a large enough stream for algorithms to approximate large quantiles, and observe how algorithm estimations approach true quantiles. In this stream we have median and \ninty quantile values at about 544,267 and 1,464,793 respectively. Due to these large quantile values \oneUany shows a slower convergence to true stream quantile, while \twoUany handles this problem much better. $Selection$ converges to [-0.1, 0.1] relative mass error region after about $2\times 10^5$ items, but it is oscillatory thereafter and needs much more items to stabilize. In contrast, although \oneUany and \twoUany need relatively more stream items to reach a large true quantile their estimations are relatively stabler. In Figure~\ref{plot:combinedduration}.(a), $b=20$ \qdigest gives very oscillatory median estimation around $8\times 10^5$, and from the curve it seems converging to stream median but apparently it needs much more stream items.

\textbf{Dynamic distribution.} The TCP-flow duration stream of 2003-12 changes its distribution in the middle due to the change of contributing set of remote sites. Therefore it serves well for the purpose of evaluating on stream with dynamic distribution. This stream length is about $1.6\times 10^6$, and durations are in unit of microsecond. Since other algorithms are not designed for dynamic distribution streams, we hide them from Figure  \ref{plot:combinedchanging}. $Stream$-$quantile$ shows the cumulative stream median and \ninty quantile values, and $Use$-$Distrib$ gives us the median and \ninty quantile values of each distribution. In Figure \ref{plot:combinedchanging}.(a) and (b), we show how quantile values (y-axis) change over time against \oneUany and \twoUany estimations. The stream median and \ninty quantile change about mid-stream, \twoUany can reach close median estimate in Figure \ref{plot:combinedchanging}.(a) before distribution change. Then it takes a clear "turn" to approach new distribution's median in the second half. Although at the end of this stream \twoUany estimation is larger than second distribution's true median (due to the large $step$ value cumulated while adapting to new distribution), we can see it shows the trend to stop increasing and converge to true median. And we expect its estimation to fall back to true median as stream continues. \twoUany shows similar behaviour in estimating \ninty quantile in Figure \ref{plot:combinedchanging}.(b),  but due to larger quantile value, it does not get the chance to reach close estimation before stream changes or ends. On the other hand, \oneUany takes much more items to reach stream quantile values, so in both plots it just leaves an almost linear trace to chase stream quantiles.

\subsection{\textbf{Twitter Data Set}}
From an on-line twitter user directory\omt{~\cite{website:twitterdirectory}}, we collected 4554 users over 80 directories (e.g. Food and Business). Those tweets from individual users form 4554 sub-streams in the ocean of all tweets. 
We extracted the intervals (in seconds) between two consecutive tweets for every user and then run our algorithms on those interval streams. This allows us to answer the question of ``what is the median inactive time for a given user across all?''.

Among the total 4554 twitterers, we removed the users with less than 2000 tweets since we need a decent number of data items to reflect the true distribution and allow our algorithms to reach true quantiles. Since twitter does not store more than 3200 tweets of a single user, therefore at the time of data collection the maximum length of a single user's interval stream is 3200. So finally we evaluated our algorithms on 4414 twitter user interval streams, and collected the last estimations for median and \ninty quantile.

Figure~\ref{plot:twitter-all} shows the relative mass error and cumulative percent of all 4414 interval streams. In Figure~\ref{plot:twitter-all}.(a) we see that about 70 percent of the last median estimation by \oneUany are under-estimating (less than -0.1). Because we initiated quantile estimations from 0, however interval stream median (and \ninty quantile) values can easily be tens of thousands (about 90\% of interval streams have \ninty quantiles larger than $10^4$), within 2000 steps it can not fully reach true medians. \twoUany performs much better than \oneUany algorithm, with more than 80 percent of the last median estimations in error range [-0.1, 0.1]. Figure~\ref{plot:twitter-all}.(b) shows that when estimating \ninty quantile, which are much larger values, as expected \oneUany cannot reach true quantile when the stream items are few (94\% of twitter user interval streams have \ninty quantiles larger than 3,200, while only about 6\% of theirs streams have size 3,200). Again \twoUany shows its advantages over \oneUany but it also needs longer streams to reach true quantiles. In comparison, $t=20$  for \gk and $b=20$ for \qdigest are not affected by stream sizes, however \selection algorithm needs much longer streams. Again note that from this figure, the overestimate errors are bounded by 0.5 and 0.1 respectively for median and \ninty quantile estimations, because relative mass error is measured.

For a database there are various meaningful group by applications, such as group by geo-location and age for an on-line social network database. To simulate such \textbf{GROUPBY} application, we evaluate our algorithms on the combined tweet interval streams on each day. We merge tweet interval streams from all 4554 twitterers in our dataset, and sort all the intervals based on the time they were created. We divide the combined interval stream into segments by day, and in total our tweet interval data spanning 1328 days from 2008 to 2011. We ran our algorithms on each day's data and take the last estimations from algorithms to evaluate their accuracy. We filter out the days that have less than 2000 intervals in the daily stream, since small number of intervals in the stream doesn't give enough chance for our algorithms to approach true quantiles. After filtering process, we have 905 days left. Figure~\ref{plot:alldays} shows the cumulative percent of all days against relative mass error, both median and \ninty quantile under-estimation problems in individual user interval streams are alleviated (in daily interval streams about 67\% of the streams have size larger than 3,200). Daily median estimation performance by \oneUany in Figure~\ref{plot:alldays}.(a) demonstrate that it can reach close estimation before the daily interval streams end. In Figure~\ref{plot:alldays}.(b), for \ninty quantile on most of the days \oneUany algorithm underestimates the true quantiles by using update size of 1. For \twoUany, for both median and \ninty quantile estimations almost all last estimations are in error range [-0.1, 0.1]. Again in comparison, $t=20$  for \gk and $b=20$ for \qdigest are not enough to get close estimations, and \selection algorithm needs much more stream items.

Throughout our extensive experiments on synthetic and real-world data, for stochastic streams given enough number of data items in the stream, our 1 and 2 variables stochastic algorithms can achieve quite comparative accuracy against other non-constant and constant memory algorithms, while using much less memory and being very efficient for per item update.

\section{Conclusions and Future Directions}
\label{sec:conclusion}
We have introduced the concept of frugal streaming and  presented algorithms that can estimate arbitrary quantiles using $1$ or $2$ unit memories. This is very useful when we need to estimate 
quantiles for each of many groups, as applications demand in reality. These algorithms do not perform well with adversarial streams, but we have mathematically analyzed the $1$ unit memory
algorithm and shown fast approach and stability properties for 
stochastic streams. Our analysis is non-trivial, and we believe it provides a framework for 
analysis of other statistical estimates with stochastic streams. Further we have reported extensive
experiments with our algorithms and several prior quantile algorithms on synthetic data as well as 
real dataset from HTTP trace and Twitter.

To the best of our knowledge our algorithms are the first that perform well with $2$ or less persistent variables per group. In contrast, other regular streaming algorithms,  while having other desirable properties, perform poorly when pushed to the extreme on memory consumption like we do with our
frugal streaming algorithms. 

Our work has initiated frugal streaming, but much remains to be done. First, we need mathematical
analyses of $2$ or more memory algorithms and at this moment, it looks quite non-trivial. We also need frugal streaming algorithms for other problems such as distinct count estimation and others, that are critical for streaming applications. Finally, as our experiments and insights indicate, frugal
streaming algorithms work with so little memory of the past that they are adaptable to changes in the stream characteristics. It will be of great interest to understand this phenomenon better.

\section*{Acknowledgements} 
This work was sponsored by the NSF Grant 1161151: AF: Sparse Approximation: Theory and Extensions.

\bibliographystyle{abbrv}

\bibliography{lib}

\end{document}